\newtheorem{theorem}{Theorem}[section]
\newtheorem{lemma}[theorem]{Lemma}
\newtheorem{remark}[theorem]{Remark}
\numberwithin{equation}{section}
\newcommand{\wutilde}[1]{\vrule depth 0pt width 0pt%
{\raise0.8pt\hbox{$\smash{{\mathop{#1} \limits_{\displaystyle\widetilde{}}}}$}}}
\newcommand{\hT}{\hat{T}}
\newcommand{\al}{\alpha}
\newcommand{\be}{\beta}
\newcommand{\de}{\delta}
\newcommand{\ga}{\gamma}
\newcommand{\si}{\sigma}
\newcommand{\ze}{\zeta}
\newcommand{\la}{\lambda}
\newcommand{\ka}{\kappa}
\newcommand{\bbZ}{\mathbb{Z}}
\newcommand{\ii}{{\rm i}}
\newcommand{\ee}{{\rm e}}
\newcommand{\tW}{\widetilde{W}}
\newcommand{\DIV}{D_4^{(1)}}
\newcommand{\oc}[1]{{#1}^{\vee}}
\newcommand{\br}[1]{{\langle{#1}\rangle}}
\long\def\@makecaption#1#2{
 \vskip 10pt
 \setbox\@tempboxa\hbox{#1. #2}
 \ifdim \wd\@tempboxa >\hsize #1. #2\par \else \hbox
to\hsize{\hfil\box\@tempboxa\hfil}
 \fi}
\begin{document}
\title[Geometric description of discrete power function associated with P$_{\rm VI}$]{Geometric description of discrete power function associated with the sixth Painlev\'e equation}
\author{Nalini Joshi}
\address{School of Mathematics and Statistics, The University of Sydney, New South Wales 2006, Australia.}
\email{nalini.joshi@sydney.edu.au}
\author{Kenji Kajiwara}
\address{Institute of Mathematics for Industry, Kyushu University, 744 Motooka, Fukuoka 819-0395, Japan.}
\email{kaji@imi.kyushu-u.ac.jp}
\author{Tetsu Masuda}
\address{Department of Physics and Mathematics, Aoyama Gakuin University, Sagamihara, Kanagawa 252-5258, Japan.}
\email{masuda@gem.aoyama.ac.jp}
\author{Nobutaka Nakazono}
\address{Department of Physics and Mathematics, Aoyama Gakuin University, Sagamihara, Kanagawa 252-5258, Japan.}
\email{nobua.n1222@gmail.com}
\author{Yang Shi}
\address{School of Mathematics and Statistics, The University of Sydney, New South Wales 2006, Australia.}
\email{yshi7200@gmail.com}
\begin{abstract}
In this paper, we consider the discrete power function associated with the sixth Painlev\'e equation. 
This function is a special solution of the so-called cross-ratio equation with a similarity constraint. 
We show in this paper that this system is embedded in a cubic lattice with $\tW(3A_1^{(1)})$ symmetry. 
By constructing the action of $\tW(3A_1^{(1)})$ as a subgroup of $\tW(\DIV)$, 
i.e., the symmetry group of P$_{\rm VI}$, we show how to relate $\tW(\DIV)$ to the symmetry group of the lattice. 
Moreover, by using translations in $\tW(3A_1^{(1)})$, we explain the odd-even structure appearing in previously known explicit formulas in terms of the $\tau$ function.
\end{abstract}
\subjclass[2010]{
14H70, 
33E17, 
34M55, 
39A14 
}
\keywords{
discrete power function;
Painlev\'e equation; 
affine Weyl group;
projective reduction;
ABS equation;
$\tau$ function
}
\maketitle
\setcounter{tocdepth}{1}

\section{Introduction}
The cross-ratio equation is a key discrete integrable equation, which connects discrete integrable systems and discrete differential geometry. 
One of its powerful roles is to provide a discrete analogue of the Cauchy-Riemann relation, thereby leading to a discrete theory of complex analytic functions.  
In this paper, we focus on the following system of partial difference equations:
\begin{align}
 &\dfrac{(z_{n,m}-z_{n+1,m})(z_{n+1,m+1}-z_{n,m+1})}{(z_{n+1,m}-z_{n+1,m+1})(z_{n,m+1}-z_{n,m})} = \frac{1}{x}, 
 \label{eqn:intro_cross-ratio}\\
 &(a_0+a_2+a_4)\,z_{n,m} = (n-a_2)\dfrac{(z_{n+1,m}-z_{n,m})(z_{n,m}-z_{n-1,m})}{z_{n+1,m}-z_{n-1,m}}\notag\\
 &\hspace{8em}+(m-a_1-a_2-a_4)\dfrac{(z_{n,m+1}-z_{n,m})(z_{n,m}-z_{n,m-1})}{z_{n,m+1}-z_{n,m-1}},\label{eqn:intro_similarity}
\end{align}
where $(n, m)\in\bbZ^2$ are independent variables, $z$ is a dependent variable and $x$, $a_i$, $i=0,1, 2, 4$, are parameters.  
Equation \eqref{eqn:intro_cross-ratio} is a special case of the cross-ratio equation and Equation \eqref{eqn:intro_similarity} is its similarity constraint \cite{NRGO2001:MR1819383}.  
The purpose of this paper is to provide a full geometric description of the system of equations \eqref{eqn:intro_cross-ratio}, \eqref{eqn:intro_similarity} and its expression in terms of $\tau$-variables.

Nijhoff {\it et al.} \cite{NRGO2001:MR1819383} showed that Equations \eqref{eqn:intro_cross-ratio} and \eqref{eqn:intro_similarity} can be regarded as a part of the B\"acklund transformations of the Painlev\'e VI equation (P$_{\rm VI}$).  
In this context, $x$ is identified as the independent variable of P$_{\rm VI}$, and 
$a_i$, $i=0,\dots,4$ ($a_3$ does not appear in the equations) are parameters corresponding to the simple roots of the affine root system of type $D_4^{(1)}$ that relates to the symmetry group of P$_{\rm VI}$
\cite{MasudaT2004:MR2097156,MasudaT2003:MR1996296,KNY2015:arXiv150908186K,OkamotoK1987:MR916698}.
However, the complete characterization of Equations \eqref{eqn:intro_cross-ratio} and \eqref{eqn:intro_similarity} in relation to the B\"acklund transformations of P$_{\rm VI}$ remains unclear.

The solution of Equations \eqref{eqn:intro_cross-ratio} and \eqref{eqn:intro_similarity} for $(n,m)\in\bbZ^2_{+}$ with the initial condition,
\begin{equation}\label{eqn:intro_initial0} 
 z_{0,0}=0,\quad z_{1,0}=1,\quad z_{0,1}={\rm e}^{r\pi \ii},
\end{equation}
where $\ii=\sqrt{-1}$ and $r=a_0+a_2+a_4$, gives rise to a discrete power function
\cite{AB2000:MR1747617,BobenkoA1996:MR1705222,BP1999:MR1676682} 
when the parameters take special values:
\begin{equation}
 x=-1,\quad 
 a_2=0,\quad
 a_1+a_2+a_4=0. 
\end{equation}
(There is more than one definition of what constitutes a discrete power function. See \cite{BooleG1860:treatise} for classical definitions.)  
Note that each of the ratios on the right hand side of \eqref{eqn:intro_similarity} is proportional to a harmonic mean of the forward and backward differences of $z_{n,m}$, so that the continuum limit of Equation \eqref{eqn:intro_similarity} is
\begin{equation}
 Z\,\frac{{\rm d} F}{{\rm d}\,Z}=X\,\frac{\partial F}{\partial X}+Y\,\frac{\partial F}{\partial Y}=2rF,
\end{equation}
which is an equation satisfied by the power function $F(Z)=Z^{2r}$, where $Z=X+\ii\,Y\in \mathbb C$.
Equations \eqref{eqn:intro_cross-ratio} and \eqref{eqn:intro_similarity} also appear as conditions for consistency of quadrilaterals on surfaces and enable a definition of discrete conformality \cite{BP1999:MR1676682,BobenkoA2008:MR2467378}.  
It was shown in \cite{AgafonovSI2005:MR2237186} that the special cases $z_{n, 0}$, $z_{n, 1}$ are given respectively by the Gamma function and Gauss hypergeometric functions.  
By applying the B\"acklund transformations, \cite{AHKM2014:MR3221837,HKM2011:MR2788707} found explicit formulas for the discrete power function in terms of the hypergeometric $\tau$ functions of P$_{\rm VI}$ as stated in Theorem \ref{theorem:HKM} below.

\begin{theorem}[\cite{AHKM2014:MR3221837,HKM2011:MR2788707}]\label{theorem:HKM}
Define the function $\tau_{\nu}(a,b,c;t)\,(c\notin\bbZ,\,\nu\in\bbZ_+)$ by 
\begin{equation}
\tau_{\nu}(a,b,c;x)=\left\{
\begin{array}{cc}
\hspace{-0.5em}\displaystyle\det\left(\varphi(a+i-1,b+j-1,c;x)\right)_{1\le i,j\le \nu}
&(\nu>0),\\
\displaystyle 1& (\nu=0),
\end{array}\right.
\label{tau}
\end{equation}
where
\begin{align}
 \varphi(a,b,c;x)=&C_0\dfrac{\Gamma(a)\Gamma(b)}{\Gamma(c)}F(a,b,c;x)\notag\\
 &+C_1\dfrac{\Gamma(a-c+1)\Gamma(b-c+1)}{\Gamma(2-c)}t^{1-c}F(a-c+1,b-c+1,2-c;x).
\end{align}
Here, $F(a,b,c;x)$ is the Gauss hypergeometric function, $\Gamma(x)$ is the Gamma function, and $C_0$ and $C_1$ are arbitrary constants. 
Then, for $(n,m)\in\bbZ^2_{+}$, $ z_{n,m}$ satisfying Equations \eqref{eqn:intro_cross-ratio} and \eqref{eqn:intro_similarity} with 
\begin{equation}
 a_2=0,\quad
 a_1+a_2+a_4=0,
\end{equation}
and the initial condition
\begin{equation}\label{eqn:initial} 
z_{0,0}=0,\quad z_{1,0}=C_0,\quad z_{0,1}=C_1x^r,
\end{equation}
where $r=a_0+a_2+a_4$, is given as follows.
\begin{enumerate}
\item Case where $n\le m$. 
\begin{equation}\label{eqn:oddeven1}
z_{n,m}=
\begin{cases}
C_1x^{r-n}N\dfrac{(r+1)_{N-1}}{(-r+1)_N}
\dfrac{\tau_{n}(-N,-r-N+1,-r;x)}{\tau_{n}(-N+1,-r-N+2,-r+2;x)}, &\text{$n+m$ is even},\\[4mm]
C_1x^{r-n}\dfrac{(r+1)_{M-1}}{(-r+1)_{M-1}}
\dfrac{\tau_{n}(-M+1,-r-M+1,-r;x)}{\tau_{n}(-M+2,-r-M+2,-r+2;x)},&\text{$n+m$ is odd}. 
\end{cases}
\end{equation}
\item Case where $n\ge m$. 
\begin{equation}\label{eqn:oddeven2}
z_{n,m}=
\begin{cases}
C_0N\dfrac{(r+1)_{N-1}}{(-r+1)_N}
\dfrac{\tau_{m}(-N+2,-r-N+1,-r+2;x)}{\tau_{m}(-N+1,-r-N+2,-r+2;x)},&\text{$n+m$ is even}, \\[4mm] 
C_0\dfrac{(r+1)_{M-1}}{(-r+1)_{M-1}}
\dfrac{\tau_{m}(-M+2,-r-M+1,-r+1;x)}{\tau_{m}(-M+1,-r-M+2,-r+1;x)}, &\text{$n+m$ is odd}.
\end{cases}
\end{equation}
\end{enumerate}
Here, $N=\frac{n+m}{2}$, $M=\frac{n+m+1}{2}$ and $(u)_j=u(u+1)\cdots(u+j-1)$ is the Pochhammer symbol. 
\end{theorem}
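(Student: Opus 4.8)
The plan is to realise $z_{n,m}$ as a ratio of $\tau$-functions on the lattice of Bäcklund transformations of P$_{\rm VI}$, and then to reduce the verification to bilinear identities for the hypergeometric $\tau$-functions \eqref{tau}.

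First I would recall, following Nijhoff \emph{et al.}~\cite{NRGO2001:MR1819383}, that the system \eqref{eqn:intro_cross-ratio}--\eqref{eqn:intro_similarity} is embedded in the group of Bäcklund transformations of P$_{\rm VI}$ whose parameters live on the $\DIV$ root lattice, so that the two shifts $(n,m)\mapsto(n+1,m)$ and $(n,m)\mapsto(n,m+1)$ are realised by explicit elements of $\tW(\DIV)$ and $z_{n,m}$ is, up to a known gauge, a ratio of the $\tau$-functions attached to the corresponding shifted parameter point. Restricting to the hyperplane $a_2=0$, $a_1+a_2+a_4=0$, the Painlev\'e transcendent degenerates to a Riccati solution and the $\tau$-functions to Casorati-type determinants whose entries solve the Gauss hypergeometric equation; this is precisely the family $\varphi(a,b,c;x)$, with $\tau_\nu$ the $\nu\times\nu$ determinant built from the contiguous copies $\varphi(a+i-1,b+j-1,c;x)$ (cf.~\cite{MasudaT2004:MR2097156,MasudaT2003:MR1996296}). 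The only analytic input from this point on is the set of contiguity relations of the Gauss hypergeometric function: the three-term linear relations, with coefficients linear in $a,b,c$, connecting $\varphi(a\pm1,\cdot,\cdot)$, $\varphi(\cdot,b\pm1,\cdot)$, $\varphi(\cdot,\cdot,c\pm1)$ and $\partial_x\varphi$.

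Next I would carry out the verification. Substituting the determinant Ansatz into the cross-ratio relation \eqref{eqn:intro_cross-ratio} and clearing denominators produces, after using the contiguity relations to bring all four determinants onto a common minor pattern, a quadratic Plücker-type identity among minors, which is an instance of the Jacobi (Desnanot--Jacobi) determinant identity. The similarity constraint \eqref{eqn:intro_similarity}, which carries the $n$- and $m$-dependent coefficients, reduces in the same way to a bilinear relation among the $\tau_\nu$ at shifted parameters in which the contiguity-relation coefficients — linear in $a,b,c$, hence linear in $(n,m)$ after the substitution $a=-N$, $b=-r-N+1$, $c=-r$ (and the odd-parity analogue with $M$) — reproduce exactly the coefficients $(n-a_2)$ and $(m-a_1-a_2-a_4)$. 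The initial data \eqref{eqn:initial} is then matched by evaluating at $x=0$ using $F(a,b,c;0)=1$ and $\tau_0=1$, and the overall Pochhammer prefactors are pinned down by demanding consistency of the two regions $n\le m$ and $n\ge m$ along the diagonal $n=m$. Finally, the odd--even dichotomy is structural rather than computational: the diagonal step $(n,m)\mapsto(n+1,m+1)$ is a single translation on the Bäcklund lattice, whereas increasing $n$ (or $m$) alone changes the size of the determinant, so under the projective reduction onto $\bbZ^2$ used in this paper the lattice splits into the two sublattices of fixed parity of $n+m$, parametrised by $N=\tfrac{n+m}{2}$ and $M=\tfrac{n+m+1}{2}$; this yields the two cases in \eqref{eqn:oddeven1}--\eqref{eqn:oddeven2}. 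As a consistency check one recovers from the boundary rows that $z_{n,0}$ and $z_{n,1}$ are the Gamma function and Gauss hypergeometric functions, matching \cite{AgafonovSI2005:MR2237186}.

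The step I expect to be the main obstacle is turning the quadratic cross-ratio relation among four ratios of determinants into a recognisable Plücker relation: one must choose the combination of contiguity relations so that all four determinants are simultaneously expressed on a single $2\times2$-minor pattern, and then the non-autonomous constraint \eqref{eqn:intro_similarity} forces the $\partial_x$-contiguity relation to be tracked in parallel with the parameter shifts. This is where the bookkeeping of the exact arguments $(-N,-r-N+1,-r)$, $(-M+1,-r-M+1,-r)$, and so on, together with the Pochhammer prefactors across all four cases, becomes delicate; by contrast the embedding into $\tW(\DIV)$, the hypergeometric degeneration of the $\tau$-functions, and the matching of the initial data are comparatively routine.
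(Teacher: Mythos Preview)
The paper does not itself prove Theorem~\ref{theorem:HKM}; it is quoted verbatim from \cite{AHKM2014:MR3221837,HKM2011:MR2788707} and serves as background. The only comment the paper makes on the proof is that in the original references ``this result was obtained by a heuristic approach and verified by the bilinear formulation of $\tau$-variables.'' Your proposal is entirely consistent with that description: you embed $z_{n,m}$ as a ratio of $\tau$-functions on the B\"acklund lattice of P$_{\rm VI}$, specialise to the Riccati locus $a_2=0$, $a_1+a_2+a_4=0$ so that the $\tau$-functions become hypergeometric Casorati determinants, and then verify \eqref{eqn:intro_cross-ratio}--\eqref{eqn:intro_similarity} via contiguity relations and Jacobi/Pl\"ucker identities. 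That is the bilinear verification the paper alludes to, so there is no methodological divergence to report.

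One point of emphasis is worth flagging. You attribute the odd--even dichotomy to the projective reduction, which is correct, but you should be aware that this explanation is \emph{not} part of the proof in \cite{AHKM2014:MR3221837,HKM2011:MR2788707}; it is precisely the new contribution of the present paper (\S\ref{section:geometric_description} and \S\ref{section:weight}). In the original references the two parity cases in \eqref{eqn:oddeven1}--\eqref{eqn:oddeven2} emerge from the direct bilinear computation and are not given any group-theoretic interpretation. So while your sketch is a faithful outline of how one would reprove the cited theorem, the structural remark about the parity sublattices borrows from the later sections of this paper rather than from the original proof.
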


We remark that in the case of generic values of parameters $a_i$, the explicit expressions of $z_{n,m}$ in terms of $\tau$-variables are given in \cite{HKM2011:MR2788707}.  
This result was obtained by a heuristic approach and verified by the bilinear formulation of $\tau$-variables.
However, the appearance of the odd-even structure in the explicit formulas has not yet been fully understood.  
In \S\S \ref{section:geometric_description} and \ref{section:weight}, we will see that this odd-even structure can be explained by the notion of projective reduction and the Weyl group symmetry on $\tau$-variables.

In this paper, we characterize Equations \eqref{eqn:intro_cross-ratio} and \eqref{eqn:intro_similarity} in relation to the B\"acklund transformations of P$_{\rm VI}$ completely, and clarify the odd-even structure appearing in Equations \eqref{eqn:oddeven1} and \eqref{eqn:oddeven2}.  
The starting point is the appearance of the cross-ratio equation in a cubic lattice, as described in the ABS theory\cite{ABS2003:MR1962121,ABS2009:MR2503862,BollR2011:MR2846098,BollR2012:MR3010833,BollR:thesis}. 
We show how the cubic lattice is embedded in the lattice on which $\tau$-variables of P$_{\rm VI}$ are assigned. 
Four copies of the weight lattice of $D_4$ will be needed to describe the lattice of $\tau$-variables (see \S \ref{section:representations}).

In this paper, we refer to several lattices associated with affine reflection groups.  
A detailed description of (and notations for) these lattices is provided in \S \ref{subsection:background} below.
\subsection{Background and Notation}\label{subsection:background} 
For completeness, we explain the notation used to describe reflection groups, Coxeter groups and associated lattices in this paper.  
The notation is illustrated here for reflection groups arising from symmetries of cubes, but they apply to reflection groups of any dimension and type.
 
A cube is left unchanged by certain reflections across hyperplanes. 
All such reflections can be expressed in terms of basic operations on vectors, which form a group, denoted by $B_3$.  
(See \cite{book_HumphreysJE1992:Reflection}.)  
In higher dimensions, we would have the reflection group $B_n$, $n\ge 1$, where the subscript refers to an integer number of dimensions. 
The group is generated by simple reflections, which we denote $s_j$, $j=1,\dots, n$. 
The simple roots associated with these reflections are denoted $\alpha_j$, $j=1,\dots, n$, while corresponding co-roots are denoted $\oc\alpha_j$.

Repeated translation of a fundamental cube leads to a space-filling cubic lattice, on which Equations \eqref{eqn:intro_cross-ratio} and \eqref{eqn:intro_similarity} are iterated.  
Reflections across hyperplanes in the cubic lattice form an affine reflection group, which is denoted by $B_n^{(1)}$. 
We will refer also to the root lattice denoted by $Q(B_n^{(1)})=\oplus_{j=0}^{n} \mathbb Z\alpha_j$. 
When interpreted as a Coxeter group, the reflection group is denoted by $W(B_n^{(1)})$.

In a Dynkin diagram, the simple reflections $s_j$ are represented by nodes.  
Obviously, such a diagram remains unchanged when the nodes are permuted. 
Such an operation is called a diagram automorphism. 
Let $\Omega$ be the group of such automorphisms. 
Incorporating these automorphisms into the space of allowable operations, we get a lift of the Coxeter group $W(B_n^{(1)})$ to an {\it extended} Coxeter group denoted $\widetilde W(B_n^{(1)})=W(B_n^{(1)})\rtimes \Omega$.  
(See \cite{NY1998:MR1666847}.)

The classification of finite reflection groups leads to several fundamental types.  
We will encounter $A_n$, $B_n$, and $D_n$ in this paper, along with their corresponding lattices, affine Coxeter groups and extended Coxeter groups. 
Where direct sums of the same type of groups occur, we use an integer coefficient as an abbreviation, e.g., $A_1\oplus A_1\oplus A_1\oplus A_1=:4 A_1$.
For conciseness, we replace $\oplus$ by $+$, e.g., $D_4\oplus A_1\oplus A_1=D_4+2 A_1$.

\subsection{Main result and plan of the paper}
The main contribution of this paper, given in Theorem \ref{theorem:main}, is to clarify the relation of the discrete power function to B\"acklund transformations of the sixth Painlev\'e equation P$_{\rm VI}$ by including the system of defining difference equations into a lattice with $\tW(3 A_1^{(1)})$ symmetry. 
In general, a cubic lattice has $W(B_3^{(1)})$ as a symmetry group, but we chose to consider the symmetry group as $\tW(3 A_1^{(1)})$ by taking different selections of reflection hyperplanes.  
This idea allows us to realize $\tW(3A_1^{(1)})$ as a subgroup of $\tW(\DIV)$, which is the symmetry group of P$_{\rm VI}$.
To achieve this realization, we use the combination of the cross-ratio equation together with its similarity relation, which is associated with the symmetry of the cubic lattice.

The plan of this paper is as follows.  
In \S \ref{section:cubic_lattice} we introduce a multidimensionally consistent cubic lattice with the symmetry $\tW(3A_1^{(1)})$ on which the cross-ratio equation is naturally defined.  
In \S \ref{section:representations} we give actions of $\tW(\DIV)$ on the parameters and the $\tau$-variables.
Based on this material, we construct a realization of $\tW(3A_1^{(1)})$ in $\tW(D_4^{(1)})$ and derive Equations \eqref{eqn:intro_cross-ratio} and \eqref{eqn:intro_similarity} in \S \ref{section:geometric_description}.  
We then explain in \S \ref{section:weight} the odd-even structures appearing in the explicit formula in \cite{AHKM2014:MR3221837,HKM2011:MR2788707}.  
Concluding remarks will be given in \S \ref{ConcludingRemarks}.

\section{Cubic Lattice}\label{section:cubic_lattice}
In this section, we describe a multidimensionally consistent cubic lattice in terms of a symmetry structure, which will turn out to be related to the symmetry group of P$_{\rm VI}$.

While each cube inside the lattice is composed of 6 component faces, we will regard these as composed of two triple faces, i.e., a pair of 3 faces around a common vertex.  
We will see in \S \ref{section:geometric_description} how this perspective is embedded in the symmetry-group-lattice of P$_{\rm VI}$.  
The symmetry group $\tW(3 A_1^{(1)})$ is suggested by the similarity equation:
\begin{equation}\label{eqn:similarity}
\begin{split}
  \ze_0\, z_{l_1,l_2}
 =&(l_1-\be_1)\cfrac{(z_{l_1+1,l_2}-z_{l_1,l_2})(z_{l_1,l_2}-z_{l_1-1,l_2})}{z_{l_1+1,l_2}-z_{l_1-1,l_2}}\\
 &+(l_2-\ga_1)\cfrac{(z_{l_1,l_2+1}-z_{l_1,l_2})(z_{l_1,l_2}-z_{l_1,l_2-1})}{z_{l_1,l_2+1}-z_{l_1,l_2-1}}.
\end{split}
\end{equation}
Here, 
\begin{equation}
 \be_1=a_2,\quad
 \ga_1=a_1+a_2+a_4,\quad
 \ze_0=a_0+a_2+a_4.
\end{equation}
Indeed, the similarity equation has three parameters $\be_1$, $\ga_1$ and $\ze_0$ and two directions $\rho_1$ and $\rho_2$, defined by $z_{l_1,l_2}={\rho_1}^{l_1}{\rho_2}^{l_2}(z_{0,0})$, which shift the parameters as follows:
\begin{equation}
 \rho_1:(\be_1,\ga_1,\ze_0)\mapsto(\be_1-1,\ga_1,\ze_0),\quad
 \rho_2:(\be_1,\ga_1,\ze_0)\mapsto(\be_1,\ga_1-1,\ze_0).
\end{equation}
It is therefore natural to expect that there exists a B\"acklund transformation of the similarity equation, which shifts the parameters as follows:
\begin{equation}
 \rho_0:(\be_1,\ga_1,\ze_0)\mapsto(\be_1,\ga_1,\ze_0+1).
\end{equation}
This implies that the three directions $\rho_0, \rho_1, \rho_2$ are translations in $\tW(3A_1^{(1)})$ and $\{\be_1,\ga_1,\ze_0\}$ are the parameters associated with the simple roots (or, coroots) of type $3A_1^{(1)}$.  
Consequently, we take the symmetry group of the cubic lattice to be $\tW(3 A_1^{(1)})$.

Consider the cubic lattice constructed by the directions $\rho_i$, $i=0,1,2$.  
We place the following equations on each respective face of a triple of faces associated with each cube (referred to as the face equations):
\begin{subequations}\label{eqns:3D_lattice_kappa}
\begin{align}
 &\cfrac{(u_{l_1,l_2,l_0}+u_{l_1+1,l_2,l_0})(u_{l_1+1,l_2+1,l_0}+u_{l_1,l_2+1,l_0})}
 {(u_{l_1+1,l_2,l_0}+u_{l_1+1,l_2+1,l_0})(u_{l_1,l_2+1,l_0}+u_{l_1,l_2,l_0})}=\cfrac{\ka^{(1)}_{l_1}}{\ka^{(2)}_{l_2}}\,,\label{eqn:Q1H1_1}\\
 &\left(\frac{1}{u_{l_1+1,l_2,l_0}}+\frac{1}{u_{l_1,l_2,l_0}}\right) 
 (u_{l_1+1,l_2,l_0+1}+u_{l_1,l_2,l_0+1})
 =-\ka^{(1)}_{l_1}\ka^{(3)}_{l_0},\label{eqn:Q1H1_2}\\
 &\left(\frac{1}{u_{l_1,l_2+1,l_0}}+\frac{1}{u_{l_1,l_2,l_0}}\right)
 (u_{l_1,l_2+1,l_0+1}+u_{l_1,l_2,l_0+1})
 =-\ka^{(2)}_{l_2}\ka^{(3)}_{l_0},\label{eqn:Q1H1_3}
\end{align}
\end{subequations}
where $\{\dots,\ka^{(i)}_{-1},\ka^{(i)}_0,\ka^{(i)}_1,\dots\}_{i=1,2,3}$ are parameters and
\begin{equation}
 u_{l_1,l_2,l_0}={\rho_1}^{l_1}{\rho_2}^{l_2}{\rho_0}^{l_0}(u_{0,0,0}).
\end{equation}
It is well known that this system of equations is multidimensionally consistent in the cubic lattice.  
Indeed, Equation \eqref{eqn:Q1H1_1} and Equations \eqref{eqn:Q1H1_2} and \eqref{eqn:Q1H1_3} are called Q1 and H1 respectively in the ABS list 
(see Figure \ref{fig:3Dlattice1})\cite{ABS2003:MR1962121,ABS2009:MR2503862,BollR2011:MR2846098,BollR2012:MR3010833,BollR:thesis}.
\begin{figure}[t]
\begin{center}
\includegraphics[width=0.6\textwidth]{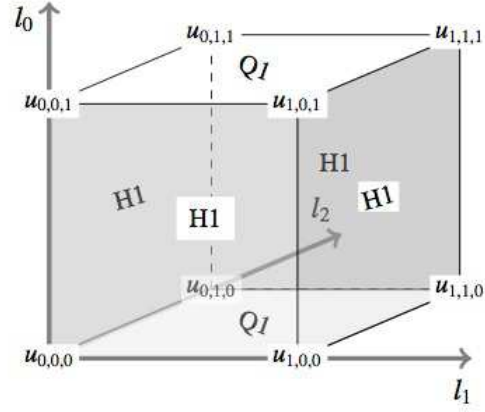}
\end{center}
\caption{The multidimensionally consistent cubic lattice. The face equations of bottom and top are given by Q1 and those of sides are given by H1.}
\label{fig:3Dlattice1}
\end{figure}

Using the geometric theory of P$_{\rm VI}$, we can realize the action of $\tW(3A_1^{(1)})$ associated with these equations (see \S \ref{section:geometric_description}).

\section{Actions of affine Weyl group $\tW(\DIV)$}\label{section:representations}
In this section, we provide the basic ingredients needed to describe actions of the symmetry group of P$_{\rm VI}$, i.e., $\tW(\DIV)$.  The explicit formulation of the $\tau$ function of P$_{\rm VI}$ in terms of the weight lattice is provided here; to our knowledge, such an explicit formulation has not appeared in the literature.  
More precisely, four copies of the weight lattice of $D_4$ will be needed to define the lattice of $\tau$-variables.  
We will consider a root system of type $D_4+2A_1$ instead of $D_4$ to give a full description.

\subsection{Linear action}
In this subsection, we define the transformation group $\tW(\DIV)$ and describe its linear actions on the weight lattice.

We consider the following $\bbZ$-modules:
\begin{align}
 &\oc Q=\bbZ\oc\al_0+\bbZ\oc\al_1+\bbZ\oc\al_2+\bbZ\oc\al_3+\bbZ\oc\al_4+\bbZ\oc\al_5+\bbZ\oc\al_6,\\
 &P=\bbZ h_0+\bbZ h_1+\bbZ h_2+\bbZ h_3+\bbZ h_4+\bbZ h_5+\bbZ h_6,
\end{align}
with the bilinear form $\langle\cdot,\cdot\rangle:\oc Q\times P\to\bbZ$ defined by
\begin{equation}\label{eqn:def_dual}
 \langle\oc\al_i,h_j\rangle=\de_{ij},\quad 0\leq i, j \leq 6.
\end{equation}
The weight lattice of type $\DIV$ is spanned by $h_i$, $i=1,3,4$, and $h_2-h_0$.
The generators $\{\oc\al_0,\dots,\oc\al_4\}$ and $\{\oc\al_5,\oc\al_6\}$ are identified with the coroot of type $\DIV$ and type $2A_1$, respectively.
Let us define the roots of type $D_4^{(1)}+2A_1$: $\{\al_0,\dots,\al_6\}$ which satisfy
\begin{equation}
 \langle\oc\al_i,\al_j\rangle
 =\begin{cases}
 A_{ij},& 0\leq i,j\leq 4\\
 2,&(i,j)=(5,5),~(6,6)\\
 0, &\text{otherwise},
 \end{cases}
\end{equation}
by the following:
\begin{equation}\label{eqn:def_alpha}
 \begin{pmatrix}
 \al_0\\ \al_1\\ \al_2\\ \al_3\\ \al_4
 \end{pmatrix} 
 =(A_{ij})_{i, j=0}^4
 \begin{pmatrix}
 h_0\\ h_1\\ h_2\\ h_3\\ h_4
 \end{pmatrix},\quad
 \al_5=2h_5,\quad
 \al_6=2h_6.
\end{equation}
Here, $(A_{ij})_{i, j=0}^4$ is the Generalized Cartan matrix of type $D_4^{(1)}$:
\begin{equation}\label{eqn:cartan_D4}
 (A_{ij})_{i, j=0}^4
 =\begin{pmatrix}
 2&0&-1&0&0\\
 0&2&-1&0&0\\
 -1&-1&2&-1&-1\\
 0&0&-1&2&0\\
 0&0&-1&0&2
 \end{pmatrix}.
\end{equation}
We note that 
\begin{equation}\label{eqn:cond_root_al}
 \al_0+\al_1+2\al_2+\al_3+\al_4=0.
\end{equation}

We define the transformations $s_i$, $i=0,\dots,4$, which are reflections for the roots
$\{\al_0,\dots,\al_4\}$, by
\begin{equation}\label{eqns:s_weight}
 s_i(\la)=\la-\langle\oc\al_i,\la\rangle\al_i,\quad i=0,\dots,4,\quad \la\in P,
\end{equation}
which give
\begin{equation}\label{eqn:s_h}
 s_i:h_i\mapsto h_2-h_i,\quad i=0,1,3,4,\quad
 s_2: h_2\mapsto h_0+h_1-h_2+h_3+h_4.
\end{equation}
Note that the ``$h_i$"s which are not explicitly shown in Equation \eqref{eqn:s_h} remain unchanged.
Also, define the transformations $\si_i$, $i=1,2,3$, which are the automorphisms of the Dynkin
diagram of type $\DIV$ (see Figure \ref{fig:D4_Dynkin}) and the reflections for the simple roots
$\al_5$ and $\al_6$, by
\begin{subequations}\label{eqns:sigma_weight}
\begin{align}
 &\si_1:
 \left(\begin{array}{l}
 h_0,~h_1,~h_2,~h_3,~h_4\\
 h_5,~h_6
 \end{array}\right)
 \mapsto
 \left(\begin{array}{l}
 h_1+h_5,~h_0+h_5,~h_2+2h_5,~h_4+h_5,~h_3+h_5\\
 -h_5,~h_6
 \end{array}\right),\label{eqn:sigma1_weight}\\
 &\si_2:
 \left(\begin{array}{l}
 h_0,~h_1,~h_2,~h_3,~h_4\\
 h_5,~h_6
 \end{array}\right)
 \mapsto
 \left(\begin{array}{l}
 h_3+h_6,~h_4+h_6,~h_2+2h_6,~h_0+h_6,~h_1+h_6\\
 h_5,~-h_6
 \end{array}\right),\\
 &\si_3=\si_1\si_2.
\end{align}
\end{subequations}
From definitions \eqref{eqn:def_dual}, \eqref{eqn:def_alpha}, \eqref{eqns:s_weight} and
\eqref{eqns:sigma_weight}, we can compute actions on the simple roots $\al_i$, $i=0,\dots,6$:
\begin{subequations}\label{eqns:action_D4_alpha}
\begin{align}
 &s_i:(\al_i,\al_2)\mapsto(-\al_i,\al_2+\al_i),\quad i=0,1,3,4,\\
 &s_2:(\al_0,\al_1,\al_2,\al_3,\al_4)\mapsto
 (\al_0+\al_2,\al_1+\al_2,-\al_2,\al_3+\al_2,\al_4+\al_2),\\
 &\si_1:(\al_0,\al_1,\al_3,\al_4,\al_5)\mapsto
 (\al_1,\al_0,\al_4,\al_3,-\al_5),\\
 &\si_2:(\al_0,\al_1,\al_3,\al_4,\al_6)\mapsto
 (\al_3,\al_4,\al_0,\al_1,-\al_6),\\
 &\si_3:(\al_0,\al_1,\al_3,\al_4,\al_5,\al_6)\mapsto
 (\al_4,\al_3,\al_1,\al_0,-\al_5,-\al_6).
\end{align}
\end{subequations}
Under the linear actions on the weight lattice \eqref{eqns:s_weight} and \eqref{eqns:sigma_weight},
$\tW(\DIV)=\langle s_0,\dots,s_4,\si_1,\si_2\rangle$ forms an extended affine Weyl group of type
$\DIV$.  Indeed, the following fundamental relations hold:
\begin{subequations}\label{eqns:fundamental_rels_D4}
\begin{align}
 &(s_is_j)^{m_{ij}}=1,\quad 0\leq i,j\leq 4,\qquad
 {\si_i}^2=1,\quad i=1,2,\\
 &\si_1s_{\{0,1,2,3,4\}}=s_{\{1,0,2,4,3\}}\si_1,\quad
 \si_2s_{\{0,1,2,3,4\}}=s_{\{3,4,2,0,1\}}\si_2,\quad
 \si_1\si_2=\si_2\si_1,\label{eqns:fundamental_rels_sigma}
\end{align}
\end{subequations}
where 
\begin{equation}
 m_{ij}=
\begin{cases}
 1,& i=j\\
 3, &i=2,~j\ne 2\quad \text{or}\quad i\neq2,~j=2\\
 2, &\text{otherwise}.
\end{cases}
\end{equation}

\begin{figure}[t]
\begin{center}
\includegraphics[width=0.4\textwidth]{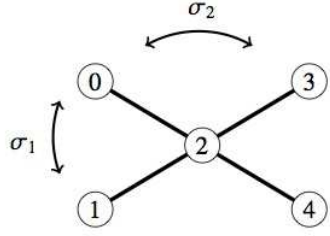}
\end{center}
\caption{Dynkin diagram of type $D_4^{(1)}$.}
\label{fig:D4_Dynkin}
\end{figure}

\begin{remark}\label{rem:bilinear_w}
We can also define the action of $\tW(\DIV)$ on the coroot lattice $\oc Q$ by replacing $\al_i$ with $\oc\al_i$ in the actions \eqref{eqns:action_D4_alpha}.  
Then, the transformations in $\tW(\DIV)$ preserve the form $\langle\cdot,\cdot\rangle$, that is, the following holds:
\begin{equation}
 \langle\ga,\la\rangle=\langle w(\ga),w(\la)\rangle,
\end{equation}
for arbitrary $w\in\tW(\DIV)$, $\ga\in\oc Q$ and $\la\in P$.
\end{remark}

\subsection{Action on $\tau$-variables}\label{subsection:birational_D4}
It is well known that we can extend the linear action of $\widetilde{W}(D_4^{(1)})$ to the action of the $\tau$ function of the Painlev\'e VI equation\cite{MasudaT2004:MR2097156,MasudaT2003:MR1996296}.  
In this section, we define the $\tau$ function on the weight lattice and give the action of $\widetilde{W}(D_4^{(1)})$ on it.

Let $M_k$, $k=0,\dots,4$, be the orbits of $h_i$, $i=0,\dots,4$, defined by  
\begin{equation}\label{eqn:def_M_i}
 M_k=\left\{w(h_k)\,\left|\, w\in\tW(\DIV)\right\}\right.,\quad k=0,\dots,4,
\end{equation}
and $M$ be their disjoint union defined by 
\begin{equation}\label{eqn:def_M}
 M=\sqcup_{k=0}^4 M_k.
\end{equation}
Each of $M_k$ is decomposed to the orbit of $W(\DIV)$ by 
\begin{equation}
 M_k=M_k^{(00)}\cup M_k^{(10)}\cup M_k^{(01)}\cup M_k^{(11)},
\end{equation}
where 
\begin{align}
 &M_k^{(ij)}=\left\{\la\in M_k\,\left|\,\br{\al_5^{\vee},\la}=i,~\br{\al_6^{\vee},\la}=j\right\}\right.,
 \quad k=0,1,3,4,\\
 &M_2^{(ij)}=\left\{\la\in M_2\,\left|\,\br{\al_5^{\vee},\la}=2i,~\br{\al_6^{\vee},\la}=2j\right\}\right..
\end{align}
For fixed $k\in\{0,1,2,3,4\}$, $M_k^{(ij)}$ are transformed to each other by the action of Dynkin automorphisms $\si_i$, $i=1,2,3$. 
For example, the weight $h_0$ lies in $M_0^{(00)}$ and the orbit of $h_0$ under $\si_1$ and $\si_2$ can be calculated by Equations \eqref{eqn:def_dual} and \eqref{eqns:sigma_weight}, which give 
\begin{equation}
 \si_1(h_0)=h_1+h_5\in M_0^{(10)},\quad
 \si_2(h_0)=h_3+h_6\in M_0^{(01)}. 
\end{equation}

\begin{remark}\label{remark:Mk(lm)}
~\\[-1.2em]
\begin{enumerate}
\item[(i)] 
Given any $w\in\tW(\DIV)$, the element $w(h_k)$, $k=0,1,3,4$, can be expressed as $\sum_{i=0}^4n_ih_i+l\,h_5+m\,h_6$, where $n_i\in\bbZ$, $l,m\in\{0,1\}$. 
By computing bilinear forms with $\oc\al_5$ and $\oc\al_6$, we see that $w(h_k)$ lies in $M_k^{(lm)}$.  
\item[(ii)] 
From the relations \eqref{eqns:fundamental_rels_sigma}, we can express any element of $\tW(\DIV)$ as one of
\begin{equation*}
 w_1,\quad
 \si_1w_2,\quad
 \si_2w_3,\quad
 \si_1\si_2w_4,
\end{equation*}
where $w_i\in W(\DIV)$, $i=1,\dots,4$. 
Note that the action of $w_i$, $i=1,\dots,4$, on $\sum_{i=0}^4 n_ih_i+l\,h_5+m\,h_6$, does not change $l$ and $m$ but $\si_i$, $i=1,2$, does change them.  
\item[(iii)] 
Based on these observations in {\rm(i)} and {\rm(ii)}, it follows that if $w(h_k)\in M_k^{(lm)}$, where $k=0,1,3,4$ and $w\in\tW(\DIV)$, the superscript ``$(lm)$'' of $M_k^{(lm)}$ is determined by the number of iterations of $\si_1$ and $\si_2$ occurring in $w$ as:
\begin{equation}
 l=\begin{cases}
 0\quad\text{(if number of $\si_1$ is even)},\\
 1\quad\text{(if number of $\si_1$ is odd)},
 \end{cases}\quad
 m=\begin{cases}
 0\quad\text{(if number of $\si_2$ is even)},\\
 1\quad\text{(if number of $\si_2$ is odd)}.
 \end{cases}
\end{equation}
This is easily verified from the actions of $\si_i$\,, $i=1,2$, on $h_k$ given in \eqref{eqns:sigma_weight}.
\item[(iv)] 
It follows that for any element $w(h_k)=\sum_{i=0}^4 n_i h_i +l\,h_5 + m\,h_6$,  
where $k=0,1,3,4$ and $w\in\tW(\DIV)$, 
$l$ and $m$ are uniquely determined from $n_i$, $i=0,\dots,4$.
For example, suppose that $w\in\tW(D_4^{(1)})$ is decomposed as $\sigma_1 w_1$, where $w_1\in W(D_4^{(1)})$.
Then we have from \eqref{eqns:sigma_weight} that 
\begin{equation}
 w_1(h_0)=n_1 h_0+n_0 h_1+n_2 h_2+n_4 h_3+n_3 h_4+(n_0 + n_1 + 2n_2 + n_3 + n_4 - l)h_5 + m h_6.
\end{equation}
Since $w_1(h_0)$ should not include $h_5$ and $h_6$ as seen in Equation \eqref{eqn:s_h}, 
we have $l=n_0 + n_1 + 2n_2 + n_3 + n_4$ and $m=0$. 
\end{enumerate}
\end{remark}

\begin{lemma}
The following properties hold.
\begin{description}
\item[(i)] 
Lattices $M_k$ are mutually disjoint. 
\item[(ii)] 
Each of $M_k$, $k=0,1,3,4$, is isomorphic to the weight lattice of type $\DIV$.
\end{description}
\end{lemma}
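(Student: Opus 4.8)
The plan is to reduce both assertions to controlling, for $k,k'\in\{0,1,2,3,4\}$ and $w\in\tW(\DIV)$, when $w(h_k)=h_{k'}$, and to computing the stabiliser of $h_k$. Two ingredients do the work: a $\tW(\DIV)$-invariant ``level'' functional on $P$, which isolates $M_2$ and keeps track of the Dynkin-automorphism directions $h_5,h_6$; and the standard description of orbits and stabilisers of fundamental weights for the ordinary affine Weyl group $W(\DIV)$.

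For part (i) I would set $\oc\de=\oc\al_0+\oc\al_1+2\oc\al_2+\oc\al_3+\oc\al_4\in\oc Q$ (this is nonzero, since $\oc Q$ is free on $\oc\al_0,\dots,\oc\al_6$, in contrast with \eqref{eqn:cond_root_al}), and $\ell(\la)=\br{\oc\de,\la}$. First I would check, directly on the generators using the coroot actions obtained from \eqref{eqns:action_D4_alpha} by replacing $\al_i$ with $\oc\al_i$ (cf.\ Remark~\ref{rem:bilinear_w}), that $\oc\de$ is fixed by every generator of $\tW(\DIV)$; together with the invariance of $\br{\cdot,\cdot}$ in Remark~\ref{rem:bilinear_w} this makes $\ell$ constant on $\tW(\DIV)$-orbits. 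Since $\ell(h_2)=2$ while $\ell(h_k)=1$ for $k\in\{0,1,3,4\}$ and $\ell(h_5)=\ell(h_6)=0$, this already separates $M_2$ from the remaining four lattices. To separate $M_0,M_1,M_3,M_4$ from each other I would write $w=\pi w_0$ with $w_0\in W(\DIV)$ and $\pi$ a product of $\si_1,\si_2$: by \eqref{eqn:s_h} the generators $s_0,\dots,s_4$ preserve $\bigoplus_{i=0}^4\bbZ h_i$, so $w_0(h_k)$ has no $h_5,h_6$-component, and a short computation with \eqref{eqns:sigma_weight} shows that if $\pi\neq1$ then $\pi\big(w_0(h_k)\big)$ has an $h_5$- or $h_6$-component equal to $\ell(h_k)\neq0$; hence $w(h_k)=h_{k'}$ with $k'\le4$ forces $w\in W(\DIV)$. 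Then $h_k-h_{k'}=w(h_k)-h_k\in Q(\DIV)=\bbZ\al_1+\cdots+\bbZ\al_4=Q(D_4)$ (using \eqref{eqn:cond_root_al} to drop $\al_0$); but $h_k-h_{k'}=\varpi_k-\varpi_{k'}$, where $\varpi_0=0$ and $\varpi_1=h_1-h_0,\ \varpi_3=h_3-h_0,\ \varpi_4=h_4-h_0$ are the three outer fundamental weights of the finite $D_4$ with simple roots $\al_1,\dots,\al_4$, and these four classes are pairwise distinct in $P(D_4)/Q(D_4)\cong(\bbZ/2\bbZ)^2$. This contradiction yields $M_k\cap M_{k'}=\emptyset$ for all $k\neq k'$.

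For part (ii) the same reduction shows that any $w\in\tW(\DIV)$ fixing $h_k$ lies in $W(\DIV)$, and there the stabiliser of the fundamental weight $h_k$ (characterised by \eqref{eqn:def_dual}) is the maximal parabolic subgroup $\langle s_j\mid 0\le j\le4,\ j\neq k\rangle$. I would note that this standard description of stabilisers of fundamental weights in an affine Weyl group persists in the present realisation, where $\al_0+\al_1+2\al_2+\al_3+\al_4=0$ collapses the module, precisely because $\ell(h_k)\neq0$ prevents any nontrivial translation part of $W(\DIV)$ from fixing $h_k$. For $k\in\{0,1,3,4\}$ this parabolic is obtained by deleting a leaf of the $\DIV$ diagram, hence is of type $D_4$, so $M_k=\tW(\DIV)\cdot h_k\cong\tW(\DIV)/W(D_4)$, which one identifies with the weight lattice of type $\DIV$ equipped with its $\tW(\DIV)$-action (for $k=2$ one gets instead $\langle s_0,s_1,s_3,s_4\rangle\cong W(4A_1)$, explaining why $k=2$ is excluded). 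Since $\si_1,\si_2$ conjugate the four parabolics $\langle s_j\mid j\neq k\rangle$, $k\in\{0,1,3,4\}$, into one another, the four cases are equivalent and it suffices to treat one, say $k=0$.

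\textbf{Main obstacle.} The real difficulty is that the diagram automorphisms $\si_1,\si_2$ do not preserve $\bigoplus_{i=0}^4\bbZ h_i$ but pull weights into the auxiliary directions $h_5,h_6$, so one cannot argue purely inside a ``$\DIV$ weight space''; the whole point of the level functional $\ell=\br{\oc\de,\cdot}$ and of the decomposition $\tW(\DIV)=W(\DIV)\rtimes\langle\si_1,\si_2\rangle$ is to bookkeep this. Secondarily, since $\de=\al_0+\al_1+2\al_2+\al_3+\al_4$ is set to $0$ in $P$, the usual facts ``$\oc\de$ is central'' and ``the stabiliser of a fundamental weight is a maximal parabolic'' have to be re-derived in this realisation, although both do go through as indicated.
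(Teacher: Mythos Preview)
Your argument is correct, and for part~{\bf(i)} it is essentially the paper's: both use the invariant level $\ell=\br{\oc\de,\cdot}$ to split off $M_2$, and both reduce the remaining four orbits to $W(\DIV)$-orbits by tracking the $h_5,h_6$-components introduced by $\si_1,\si_2$. The only difference is the last step: the paper argues that if $w(h_0)=h_1$ for some $w\in W(\DIV)$ then $w$ would have to be the reflection in $h_1-h_0$, which is not a root; you instead observe that $w(h_k)-h_k\in Q(D_4)$ so $h_k$ and $h_{k'}$ would have to lie in the same coset of $P(D_4)/Q(D_4)\cong(\bbZ/2\bbZ)^2$. Your version is the cleaner and more standard way to finish.

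For part~{\bf(ii)} the routes genuinely diverge. The paper works concretely: it defines the projection $p:M_0\to\bigoplus_{i=0}^4\bbZ h_i$ that forgets $h_5,h_6$, checks (via Remark~\ref{remark:Mk(lm)}) that $p$ is injective, and then identifies $p(M_0)$ with the union of the four $W(\DIV)$-orbits $M_k^{(00)}$, $k=0,1,3,4$, sitting inside the level-$1$ hyperplane. You instead compute the stabiliser of $h_k$ in $\tW(\DIV)$ as the maximal parabolic $\langle s_j:j\neq k\rangle\cong W(D_4)$, and invoke $\tW(\DIV)\cong P(D_4)\rtimes W(D_4)$ to identify the orbit $M_k\cong\tW(\DIV)/W(D_4)\cong P(D_4)$. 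The paper's approach is explicit and also explains, en route, the four-piece decomposition $M_0=\bigsqcup_{i,j}M_0^{(ij)}$ used elsewhere; yours is shorter and makes the role of the parabolic transparent, and your remark that $\ell(h_k)\neq0$ is exactly what ensures faithfulness of the translation part on $h_k$ (so that the stabiliser really is just the finite parabolic despite $\de=0$ in $P$). Both are valid; the paper's version is more self-contained in this realisation, while yours relies on standard structure theory of extended affine Weyl groups.
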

\begin{proof}
First we prove the property {\bf(i)}.
Define
\begin{equation}
 \de^\vee=\oc\al_0+\oc\al_1+2\oc\al_2+\oc\al_3+\oc\al_4,
\end{equation}
which is invariant under the action of $\tW(\DIV)$.  
From Remark \ref{rem:bilinear_w}, it is obvious that $M_2$ is disjoint from the others, since $\br{\de^\vee,\la}=2$ for any $\la\in M_2$ and $\br{\de^\vee,\la}=1$ for any $\la\in M_k$, $k=0,1,3,4$. 
Let us consider the 16 lattices $M_k^{(ij)}$, $k=0,1,3,4$.  
It is easy to see that these are mutually disjoint if the superscripts $(ij)$ are different from each other.  
Then it is sufficient to show that $M_k^{(00)}$, $k=0,1,3,4$, are mutually disjoint.  
We prove it by contradiction.  
Suppose that there exists an element of $M_0^{(00)}\cap M_1^{(00)}$.  
Since we are now restricting our attention to $M_k^{(00)}$, we use only the actions of $W(\DIV)$.  
Then, there exists an element $w\in W(\DIV)$ such that $h_1=w(h_0)$ which is the reflection with respect to the root vector $h_1-h_0$.  
Therefore, it contradicts the assumption.

Next, we show the property {\bf(ii)}. 
We prove it for $M_0$, by constructing a map defined by
\begin{equation}
 p:M_0 \mapsto \bigoplus_{k=0}^4{\mathbb Z}h_k
\end{equation}
where every element of $M_0$ is given by $\sum_{i=0}^4n_ih_i+l\,h_5+m\,h_6$, with the proviso that $l$ and $m$ are determined by $\{n_i\}$, as mentioned in Remark \ref{remark:Mk(lm)}. 
We define the image of such an element under $p$ as $\sum_{i=0}^4n_ih_i$. 
It follows from this construction that $p$ is injective, because two elements in the domain of $p$ with the same image must therefore be identical. 
Since $p$ is injective and $\br{\delta^{\vee},p(\lambda)}=1$ for $\la\in M_0$, lattice $M_0$ is isomorphic to the weight lattice of type $D_4$.
Note that we have
\begin{align}
 p(M_0)&=p(M_0^{(00)})\cup p(M_0^{(10)})\cup p(M_0^{(01)})\cup p(M_0^{(11)})\notag\\
 &\cong M_0^{(00)}\cup M_1^{(00)}\cup M_3^{(00)}\cup M_4^{(00)}.
\end{align}
In a similar manner, we can prove the property {\bf(ii)} for $M_k$, $k=1,3,4$.
Therefore, we have completed the proof.
\end{proof}

We now consider the $\tau$-variables assigned on the lattice $M$, and action of $\tW(\DIV)$ on them. 
As we will show in \S \ref{section:geometric_description}, the dependent variable of the system of partial difference equations \eqref{eqn:intro_cross-ratio} and \eqref{eqn:intro_similarity} is defined by a ratio of $\tau$-variables, and the system is derived
from the action of the Weyl group on the $\tau$-variables.

First, let $\{\tau_i\}_{i=0}^4$, $\tau_2^{(\si_1)}$, $\tau_2^{(\si_2)}$, $\tau_2^{(\si_3)}$,
$\tau_2^{(s_2)}$ be the variables and $a_i$, $i=0,\dots,4$, $x$ be the complex parameters satisfying
the following conditions:
\begin{align}
 &x\tau_2+x^{1/2}(x-1)^{1/2}\tau_2^{(\si_1)}
 =-\ii x^{1/2}\tau_2^{(\si_2)}
 =\tau_2-\ii(x-1)^{1/2}\tau_2^{(\si_3)},\\
 &a_0+a_1+2a_2+a_3+a_4=1.\label{eqn:cond_para_a}
\end{align}
Therefore, the number of essential variables and parameters are 7 and 5, respectively.  
We now proceed to give the actions of $\tW(\DIV)$ on these $\tau$-variables, which were deduced in
\cite{MasudaT2004:MR2097156,MasudaT2003:MR1996296}:
\begin{equation}\label{eqns:action_D4_tau}
\begin{split}
 &s_0\,:\,
 \tau_0\mapsto u^2v^2\cfrac{\tau_2^{(\si_1)}}{\tau _0},\quad
 \tau_2^{(s_2)}\mapsto\cfrac{u^2v^2\tau_2^{(\si_1)}\tau_2^{(s_2)}-a_0u^{-2}\tau_{0134}}{{\tau_0}^2},\\
 &s_1\,:\,
 \tau_1\mapsto\cfrac{\tau_2}{\tau_1},\quad
            \tau_2^{(s_2)}\mapsto\cfrac{\tau_2^{(s_2)}\tau_2}{{\tau_1}^2},\\
 &s_2\,:\,
\left\{
\begin{array}{l}
 \tau_2\mapsto\tau_2^{(s_2)},\quad
 \tau_2^{(s_2)}\mapsto\tau_2,\quad
 \tau_2^{(\si_1)}\mapsto\cfrac{\tau_2^{(\si_1)}\tau_2^{(s_2)}+a_2u^{-4}v^{-2}\tau_{0134}}{\tau_2},\\[4mm]
 \tau_2^{(\si_2)}\mapsto\cfrac{\tau_2^{(\si_2)}\tau_2^{(s_2)}+\ii a_2u^{-4}\tau_{0134}}{\tau_2},\quad
 \tau_2^{(\si_3)}\mapsto\cfrac{\tau_2^{(\si_3)}\tau_2^{(s_2)}+\ii a_2u^{-2}v^{-2}\tau_{0134}}{\tau_2},
\end{array}
\right.
\\
 &s_3\,:\,
 \tau_3\mapsto-\ii v^2\cfrac{\tau_2^{(\si_3)}}{\tau_3},\quad
 \tau_2^{(s_2)}\mapsto\cfrac{-\ii v^2\tau_2^{(\si_3)}\tau_2^{(s_2)}-a_3u^{-2}\tau_{0134}}{{\tau_3}^2},\\
 &s_4\,:\,
 \tau_4\mapsto-\ii u^2\frac{\tau_2^{(\si_2)}}{\tau_4},\quad
 \tau_2^{(s_2)}\mapsto\cfrac{-\ii u^2\tau_2^{(\si_2)}\tau_2^{(s_2)}-a_4u^{-2}\tau_{0134}}{{\tau_4}^2},\\
 &\si_1\,:\,
\left\{
\begin{array}{l}
 \tau_0\mapsto uv\tau_1,\quad
 \tau_1\mapsto u^{-1}v^{-1} \tau_0,\quad
 \tau_3\mapsto u^{-1}v \tau_4,\quad
 \tau_4\mapsto uv^{-1}\tau_3,\\[3mm]
 \tau_2\mapsto\tau_2^{(\si_1)},\quad
 \tau_2^{(\si_1)}\mapsto\tau_2,\quad
 \tau_2^{(\si_2)}\mapsto\tau_2^{(\si_3)},\quad
 \tau_2^{(\si_3)}\mapsto\tau_2^{(\si_2)},\\[3mm]
 \tau_2^{(s_2)}\mapsto-s_2(\tau_2^{(\si_1)}),
\end{array}\right.\\[4mm]
 &\si_2\,:\,
\left\{
\begin{array}{l}
 \tau_0\mapsto\ee^{-\pi \ii/4} u \tau_3,\ \ 
 \tau_1\mapsto\ee^{\pi \ii/4} u^{-1} \tau_4,\ \  
 \tau_3\mapsto\ee^{-\pi \ii/4} u^{-1} \tau_0,\ \ 
 \tau_4\mapsto\ee^{\pi \ii/4} u \tau_1,\\[3mm]
 \tau_2\mapsto\tau_2^{(\si_2)},\quad
 \tau_2^{(\si_1)}\mapsto-\tau_2^{(\si_3)},\quad
 \tau_2^{(\si_2)}\mapsto-\tau_2,\quad
 \tau_2^{(\si_3)}\mapsto\tau_2^{(\si_1)},\\[3mm]
 \tau_2^{(s_2)}\mapsto s_2(\tau_2^{(\si_2)}),
\end{array}\right.
\end{split}
\end{equation}
where 
\begin{equation}
 u=x^{1/4},\quad
 v=(x-1)^{1/4},\quad
 \tau_{0134}=\tau_0\tau_1\tau_3\tau_4.
\end{equation}
Here, we define the action on the parameters $a_i$, $i=0,\dots,4$, by
\begin{subequations}
\begin{align}
 &s_i(a_j)=a_j-A_{ij}a_i,\quad 0\leq i,j\leq4,\\
 &\si_1a_{\{0,1,2,3,4\}}=a_{\{1,0,2,4,3\}},\quad
 \si_2a_{\{0,1,2,3,4\}}=a_{\{3,4,2,0,1\}},
\end{align}
\end{subequations}
where $A_{ij}$ is given in \eqref{eqn:cartan_D4}. 
Note that each parameter $a_i$ is associated with the root $\al_i$, $i=0,\dots,4$,
and the condition \eqref{eqn:cond_para_a} corresponds to \eqref{eqn:cond_root_al}.

The fundamental relations \eqref{eqns:fundamental_rels_D4} also hold under the actions $s_i$, $i=0,\dots,4$, on the $\tau$-variables and the parameters. 
Note that the extended part $\si_i$, $i=1,2$, are modified to become
\begin{subequations}\label{eqns:fundamental_tau_gaps}
\begin{align}
 &{\si_2}^2.
 \left(\begin{array}{l}
 \tau_0,~\tau_1,~\tau_2,~\tau_3,~\tau_4\\
 \tau_2^{(\si_1)},~\tau_2^{(\si_2)},~\tau_2^{(\si_3)},~\tau_2^{(s_2)}
 \end{array}\right)
 =\left(\begin{array}{l}
 -\ii\tau_0,~\ii\tau_1,~-\tau_2,~-\ii\tau_3,~\ii\tau_4\\
 -\tau_2^{(\si_1)},~-\tau_2^{(\si_2)},~-\tau_2^{(\si_3)},~-\tau_2^{(s_2)}
 \end{array}\right),\\
 &\si_1s_2.\left(\tau_2,\tau_2^{(\si_1)},\tau_2^{(\si_2)},\tau_2^{(\si_3)},\tau_2^{(s_2)}\right)
 =s_2\si_1.\left(-\tau_2,-\tau_2^{(\si_1)},-\tau_2^{(\si_2)},-\tau_2^{(\si_3)},-\tau_2^{(s_2)}\right),\\
 &\si_1\si_2.
 \left(\begin{array}{l}
 \tau_0,~\tau_1,~\tau_2,~\tau_3,~\tau_4\\
 \tau_2^{(\si_1)},~\tau_2^{(\si_2)},~\tau_2^{(\si_3)},~\tau_2^{(s_2)}
 \end{array}\right)
 =\si_2\si_1.\left(\begin{array}{l}
 -\ii\tau_0,~\ii\tau_1,~-\tau_2,~-\ii\tau_3,~\ii\tau_4\\
 -\tau_2^{(\si_1)},~-\tau_2^{(\si_2)},~-\tau_2^{(\si_3)},~-\tau_2^{(s_2)}
 \end{array}\right).
\end{align}
\end{subequations}

\begin{remark}
The correspondence between the notations in this paper and those in \cite{MasudaT2004:MR2097156,MasudaT2003:MR1996296} is given by
\begin{equation}
 \left(\begin{array}{l}
 a_0,a_1,a_2,a_3,a_4,x\\
 s_0,s_1,s_2,s_3,s_4,\si_1,\si_2,\si_3\\
 \tau_0,~\tau_1,~\tau_2,~\tau_3,~\tau_4\\
 \tau_2^{(\si_1)},~\tau_2^{(\si_2)},~\tau_2^{(\si_3)},~\tau_2^{(s_2)}
 \end{array}\right)
 \to
 \left(\begin{array}{l}
 \al_0,\al_1,\al_2,\al_3,\al_4,t\\
 s_0,s_1,s_2,s_3,s_4,s_5,s_6,s_7\\
 \tau_0,~\tau_1,~\tau_2,~\tau_3,~\tau_4\\
 s_5(\tau_2),~s_6(\tau_2),~s_7(\tau_2),~s_2(\tau_2)
 \end{array}\right).
\end{equation}
\end{remark}

We then define a mapping $\tau$ on $M$ by
\begin{equation}
 \tau(h_i) = \tau_i,\quad
 \tau(\si_j.h_2)=\tau_2^{(\si_j)},\quad 
\tau(s_2.h_2)=\tau_2^{(s_2)},\quad
\tau(w.h_i) = w(\tau_i),
\end{equation}
where $i=0,\dots,4$, $j=1,2,3$ and $w\in\tW(D_4^{(1)})$.

\section{Geometric description of discrete power function}\label{section:geometric_description} 
In this section, we provide our main result, which is based on the action of $\tW(D_4^{(1)})$ constructed in the previous sections.

\begin{theorem}\label{theorem:main}
~\\[-1.2em]
\begin{enumerate}
\item[{\bf I.}] 
There exist elements $\rho_i\in\tW(\DIV)$, $i=0, 1,2$, which satisfy the following properties.
\begin{description}
\item[(i)] {\rm (}Projective reduction{\rm )}
For each $i=0, 1,2$, the element ${\rho_i}^2$ is a translation of $\tW(\DIV)$, while $\rho_i$ itself is not a translation.
\item[(ii)]   {\rm (}Commutativity{\rm )}
$\rho_i\,\rho_j=\rho_j\,\rho_i$\,,~ for $i, j=0, 1,2$.
 \item[(iii)]  {\rm (}Bilinear forms{\rm )}
There exists a subspace $\bbZ\be_1\oplus\bbZ\ga_1\oplus\bbZ\ze_1$, of the parameter space $\bbZ a_1\oplus\bbZ a_2\oplus\bbZ a_3\oplus\bbZ a_4$, on which the actions become translations.
That is, the projections of $\rho_i$, $i=0,1,2$ on the sub-space satisfy the following properties:  
\begin{equation}
 (\be_1|\ga_1)=(\ga_1|\ze_1)=(\ze_1|\be_1)=0,\quad
 (\be_1|\be_1)=(\ga_1|\ga_1)=(\ze_1|\ze_1)=2,
\end{equation}
where the bilinear form $(\,|\,)$ on the parameter space is defined by
\begin{equation}\label{eqn:bracket_a}
 (a_i|a_j)=\langle\oc\al_i,\al_j\rangle=A_{ij}.
\end{equation}
\item[(iv)]  {\rm (}Shift actions{\rm )}
Each $\rho_i$\,, $i=0,1,2$, shifts one of $\{\be_1,\ga_1,\ze_1\}$ but not the others, i.e.,
\begin{subequations}
\begin{align}
 &\rho_1(\be_1)-\be_1\in\bbZ_{\neq0},
 &&\rho_1:(\ga_1,\ze_1)\mapsto(\ga_1,\ze_1),\\
 &\rho_2(\ga_1)-\ga_1\in\bbZ_{\neq0},
 &&\rho_2:(\be_1,\ze_1)\mapsto(\be_1,\ze_1),\\
 &\rho_0(\ze_1)-\ze_1\in\bbZ_{\neq0},
 &&\rho_0:(\be_1,\ga_1)\mapsto(\be_1,\ga_1).
\end{align}
\end{subequations}
\end{description}
\item[{\bf II.}] 
Define $z_{l_1,l_2}$ by 
\begin{equation}\label{eqn:def_z_l1l2}
 z_{l_1,l_2}=(-1)^{l_1+l_2}\rho_1^{l_1}\rho_2^{l_2}(z_0),
\end{equation}
where
\begin{equation}\label{eqn:def_z0}
 z_0=\cfrac{{\rho_0}^2(\tau_0)}{\tau_0}.
\end{equation}
Then $z_{l_1,l_2}$ satisfies the cross-ratio equation 
\begin{equation}\label{eqn:cross-ratio}
 \cfrac{(z_{l_1,l_2}-z_{l_1+1,l_2})(z_{l_1+1,l_2+1}-z_{l_1,l_2+1})}{(z_{l_1+1,l_2}-z_{l_1+1,l_2+1})(z_{l_1,l_2+1}-z_{l_1,l_2})}=\cfrac{1}{x}\,,
\end{equation}
and similarity equation \eqref{eqn:similarity}.
\end{enumerate}
\end{theorem}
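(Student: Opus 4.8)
The plan is to construct the three elements $\rho_i\in\tW(\DIV)$ explicitly as words in the generators $s_0,\dots,s_4,\si_1,\si_2$, and then verify the properties {\bf I.(i)}--{\bf(iv)} and {\bf II} by direct computation using the linear action \eqref{eqns:s_weight}, \eqref{eqns:sigma_weight} on the weight lattice and the birational action \eqref{eqns:action_D4_tau} on the $\tau$-variables. The key observation guiding the choice of $\rho_i$ is that the desired shifts in {\bf(iii)}--{\bf(iv)} single out the parameters $\be_1=a_2$, $\ga_1=a_1+a_2+a_4$, and $\ze_1$ (to be identified with a reflection of $\ze_0=a_0+a_2+a_4$ inside the parameter space). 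Since each $\rho_i$ must square to a translation but not itself be one, the natural candidates are products of a simple reflection (or a pair of them) with a Dynkin automorphism $\si_j$; the automorphisms are precisely the non-translational elements that can still descend to translations on a sublattice. So first I would write down, for each $i\in\{0,1,2\}$, a specific such word, using the $\si_1\leftrightarrow\si_2$ structure to pair up the $A_1^{(1)}$ components.

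Second, I would verify {\bf(ii)} commutativity and {\bf(i)} projective reduction purely at the level of the extended Weyl group, using the fundamental relations \eqref{eqns:fundamental_rels_D4}: compute ${\rho_i}^2$ and check it acts on $P$ by a nonzero translation $\la\mapsto\la+(\text{element of the relevant lattice})$, and check $\rho_i$ itself moves, e.g., the null root data in a way incompatible with being a translation (a translation fixes every $a_j$ but permutes nothing nontrivially on the finite part in a reflection-like manner). For {\bf(iii)} and {\bf(iv)}, I would compute the action of each $\rho_i$ on $(a_1,a_2,a_3,a_4)$ via $s_i(a_j)=a_j-A_{ij}a_i$ and the permutation actions of $\si_1,\si_2$, restrict to the span of $\be_1,\ga_1,\ze_1$, and read off that $\rho_i$ shifts exactly one generator by a nonzero integer and fixes the other two; the orthogonality and norm relations $(\be_1|\ga_1)=0$, $(\be_1|\be_1)=2$ etc.\ follow from \eqref{eqn:bracket_a} and the explicit expressions $\be_1=a_2$, $\ga_1=a_1+a_2+a_4$, and whatever $\ze_1$ turns out to be — this is a small finite linear-algebra check with the Cartan matrix \eqref{eqn:cartan_D4}.

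Third, for part {\bf II} I would push everything down to the $\tau$-variables. Setting $z_0={\rho_0}^2(\tau_0)/\tau_0$ and $z_{l_1,l_2}=(-1)^{l_1+l_2}\rho_1^{l_1}\rho_2^{l_2}(z_0)$, the cross-ratio equation \eqref{eqn:cross-ratio} and the similarity equation \eqref{eqn:similarity} become polynomial (bilinear) identities among the shifted $\tau$-variables. These should reduce, after applying the birational formulas \eqref{eqns:action_D4_tau}, to known bilinear (Hirota-type) relations for the $\tau$ function of P$_{\rm VI}$ — in practice, three-term relations of the type that express the cross-ratio of four neighbouring lattice points as a ratio of the parameters $\ka$, matching \eqref{eqns:3D_lattice_kappa}. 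The strategy is: (a) identify the four $\tau$-variables appearing in $z_{l_1,l_2}, z_{l_1+1,l_2}, z_{l_1,l_2+1}, z_{l_1+1,l_2+1}$ as values of $\tau$ at four explicit weights in $M$; (b) use the action \eqref{eqns:action_D4_tau} to express the cross-ratio; and (c) recognize the resulting identity as a bilinear equation that holds on the whole weight lattice (which can be certified by checking it at one point and using $\tW(\DIV)$-covariance, since the lattice of $\tau$-variables carries the full group action). The similarity equation follows from the same data together with the explicit dependence of the relevant translations on the parameters $\be_1,\ga_1,\ze_0$ established in part {\bf I}.

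The main obstacle I anticipate is part {\bf II}, specifically keeping track of the scalar prefactors (the powers of $u=x^{1/4}$, $v=(x-1)^{1/4}$, the factors of $\ii$ and $\ee^{\pm\pi\ii/4}$) in \eqref{eqns:action_D4_tau}: the sign-twist $(-1)^{l_1+l_2}$ in \eqref{eqn:def_z_l1l2} and the relations \eqref{eqns:fundamental_tau_gaps} are there precisely to absorb these, and getting the cross-ratio to collapse cleanly to $1/x$ rather than some root of $x$ times a sign requires threading those factors carefully. A secondary difficulty is verifying that the iterated images $\rho_1^{l_1}\rho_2^{l_2}(z_0)$ stay within a single coherent family of $\tau$-ratios (rather than jumping between the four copies $M_k^{(ij)}$ in an uncontrolled way) — this is where the bookkeeping of Remark \ref{remark:Mk(lm)} about the parity of the number of $\si_1$'s and $\si_2$'s becomes essential, and it is also the point that will make the odd-even structure of Theorem \ref{theorem:HKM} visible in \S\ref{section:weight}.
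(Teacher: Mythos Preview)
Your plan is correct in outline and follows the same overall strategy as the paper: give explicit words for the $\rho_i$, verify {\bf(i)}--{\bf(iv)} on parameters, then check {\bf II} at one base point and transport by $\rho_1,\rho_2$-covariance. Two points of calibration are worth noting.

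First, your guess that each $\rho_i$ is ``a simple reflection (or a pair) times a Dynkin automorphism'' undershoots. The paper does not guess the words directly; it first realizes an entire $\tW(3A_1^{(1)})$ subgroup inside $\tW(\DIV)$ by writing down reflections $s_{\be_0},s_{\be_1},s_{\ga_0},\dots$ for the non-simple roots $\be_1=a_2$, $\ga_1=a_1+a_2+a_4$, $\ze_1=a_1+a_2+a_3$ (and a fourth $\mu_1=a_2+a_3+a_4$), together with diagram-automorphism-like elements $\pi_{\be\mu},\pi_{\ga\mu},\pi_{\ze\mu}$. Each of these is already a word of length $5$--$9$ in the $s_i,\si_j$; the $\rho_i$ are then the $\tW(3A_1^{(1)})$-translations $\rho_1=s_{\be_0}\pi_{\be\mu}$, $\rho_2=s_{\ga_0}\pi_{\ga\mu}$, $\rho_0=s_{\ze_0}\pi_{\ze\mu}$. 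Properties {\bf(i)}--{\bf(iv)} then fall out from the $3A_1^{(1)}$ structure (each $\rho_i$ translates on $\be,\ga,\ze$ but flips $\mu_0\leftrightarrow\mu_1$, hence is not a $\DIV$-translation while $\rho_i^2$ is). Your verification scheme works, but finding the right words is much easier once you know to look for the $3A_1^{(1)}$ sublattice rather than trial-and-error with short words.

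Second, for {\bf II} the paper does not appeal to an external library of Hirota bilinear identities. It computes the four \emph{sums} $z_0+z_1$, $z_{12}+z_2$, $z_1+z_{12}$, $z_2+z_0$ directly via \eqref{eqns:action_D4_tau}; each one factors as $\ze_0\,\tau_1\tau_3/(\tau_0\tau_4)$ times a simple rational function of $x$ and the single auxiliary variable $q=-\ii x^{1/2}\tau_2^{(\si_2)}/\tau_2$, and the cross-ratio $1/x$ drops out by cancellation. The similarity equation is obtained the same way, now also using $\rho_1^{-1},\rho_2^{-1}$ to bring in $z_{-1,0},z_{0,-1}$. Your concern about scalar bookkeeping (powers of $u,v$, phases, the $(-1)^{l_1+l_2}$ twist absorbing signs so that differences become sums) is exactly right and is where all the work sits; but the payoff is a four-line identity, not a reduction to named bilinear relations.
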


\begin{remark}
~\\[-1.2em]
\begin{description}
\item[a]
We note that $\bbZ$-module spanned by the parameters $a_i$, $i=0,\dots,4$, gives coroot- or root-lattice of type $\DIV$ denoted by $Q(D_4^{(1)})$, with the pairing $(\,|\,)$.
\item[b]
The properties {\bf (iii)} and {\bf (iv)} are natural requirements since  {\bf (iii)} is the property that $\be_1$, $\ga_1$, $\ze_1$ form simple roots of type $3A_1^{(1)}$ and {\bf (iv)} comes from the idea such that $\rho_i$ are the fundamental translations in $\tW(3A_1^{(1)})$.
\item[c]
The choice of $z_0$ was motivated by the observation that the solution of the cross-ratio equation
is given by a ratio of two $\tau$ functions, where one of the $\tau$ functions lies in a direction orthogonal to the plane in which the equation is iterated \cite{HKM2011:MR2788707}. 
This observation also shows that the variable must be defined in terms of a square of $\rho_0$. 
The remaining directions $\rho_1$ and $\rho_2$ act as the two shift directions.
\item[d] 
By choosing $\tau$-variables appropriately as in \cite{AHKM2014:MR3221837}, we can reproduce the initial values \eqref{eqn:intro_initial0} of the discrete power function solution of \eqref{eqn:intro_cross-ratio} and \eqref{eqn:intro_similarity}. 
In particular, note that the choice $\tau_0=1$ (up to gauge) and ${\rho_0}^2(\tau_0)=0$ leads to the value $z_0=0$.
\item[e]
The property {\bf (i)} is called a {\it projective reduction}, in the sense that we took $\rho_i$, which were not translations in $\tW(\DIV)$, and considered them in a subgroup in which they became translations. 
Many discrete integrable systems can be derived by using such a procedure (see \cite{JNS2015:MR3403054,KN2015:MR3340349,KNT2011:MR2773334,TakenawaT2003:MR1996297}). 
\end{description}
\end{remark}

\begin{proof}
We begin with the construction of the subspace of parameters needed in item {\bf(iii)}. 
Choose 
\begin{equation}
 \be_1=a_2.
\end{equation}
The parameters $\ga_1$ and $\ze_1$ can be chosen from 
\begin{align}
 a_1+a_2+a_4,\quad
 a_1+a_2+a_3,\quad
 a_2+a_3+a_4,
\end{align}
to satisfy the conditions in item {\bf(iii)}. 
For example, take $\ga_1= a_1+a_2+a_4$. 
Then $(\be_1|\ga_1)=0$ and the other required conditions follow from Equations \eqref{eqn:cartan_D4} and \eqref{eqn:bracket_a}.
There is one remaining choice, which we label as $\mu_1$. 
To be specific, from now on, we define 
\begin{equation}
 \ga_1=a_1+a_2+a_4,\quad
 \ze_1=a_1+a_2+a_3,\quad
 \mu_1=a_2+a_3+a_4,
\end{equation}
and also define an additional set of 4 parameters
\begin{equation}
 \be_0=1-\be_1,\quad
 \ga_0=1-\ga_1,\quad
 \ze_0=1-\ze_1,\quad
 \mu_0=1-\mu_1,
\end{equation}
which are needed as simple roots of $4A_1^{(1)}$.

The $\bbZ$-module spanned by the parameters $\be_i$, $\ga_i$, $\ze_i$, $\mu_i$, $i=0,1$, denoted by $Q(4A_1^{(1)})$ gives a coroot- or root-lattice of type $4A_1^{(1)}$, with the pairing $(\,|\,)$.
The transformation group for $Q(4A_1^{(1)})$ forms an extended affine Weyl group of type $4A_1^{(1)}$.  
(See Appendix \ref{section:action_4A1_para} for more detail.)  
Notice that by relinquishing one pair of parameters, say $\mu_i$, $i=0, 1$, we obtain $Q(3A_1^{(1)})$ from $Q(4A_1^{(1)})$.  
We here focus on the sublattice spanned by the parameters $\be_i$, $\ga_i$, $\ze_i$, denoted by $Q(3A_1^{(1)})$.  
Nevertheless, we continue to use the notations including $\mu$ such that
\begin{subequations}\label{eqns:pi_3A1}
\begin{align}
 &\pi_{\be\mu}:(\be_0,\be_1,\mu_0,\mu_1)\mapsto(\be_1,\be_0,\mu_1,\mu_0),\\
 &\pi_{\ga\mu}:(\ga_0,\ga_1,\mu_0,\mu_1)\mapsto(\ga_1,\ga_0,\mu_1,\mu_0),\\
 &\pi_{\ze\mu}:(\ze_0,\ze_1,\mu_0,\mu_1)\mapsto(\ze_1,\ze_0,\mu_1,\mu_0),
\end{align}
\end{subequations}
for consistency with Appendix \ref{section:action_4A1_para}.

We next construct $\rho_i$ in $\tW(3A_1^{(1)})$ by using the elements of $\tW(\DIV)$. 
Recall that $\tW(\DIV) =\langle s_0,\dots,s_4,\si_1,\si_2,\si_3\rangle$. 
The extended affine Weyl group $\tW(3A_1^{(1)})$ corresponding to the parameters chosen above is given by
\begin{equation}
 \widetilde{W}(3A_1^{(1)})=\langle s_{\be_0},s_{\be_1},s_{\ga_0},s_{\ga_1},s_{\ze_0},s_{\ze_1},\pi_{\be\mu},\pi_{\ga\mu},\pi_{\ze\mu}\rangle,
\end{equation}
where
\begin{subequations}\label{eqns:def_W(3A1)}
\begin{align}
 &s_{\be_0}=s_4s_3s_1s_0s_2s_4s_3s_1s_0,\quad
 s_{\be_1}=s_2,\quad
 s_{\ga_0}=s_0s_3s_2s_0s_3,\quad
 s_{\ga_1}=s_1s_4s_2s_1s_4,\\
 &s_{\ze_0}=\si_2 s_1s_3s_2s_1s_3\si_2,\quad
 s_{\ze_1}=s_1s_3s_2s_1s_3,\quad
 \pi_{\be\mu}=\si_1 s_4s_3s_1s_0,\quad
 \pi_{\ga\mu}=\si_3,\\
 &\pi_{\ze\mu}=\si_2.
\end{align}
\end{subequations}
The action of $\tW(3A_1^{(1)})$ on the parameters $\be_i$, $\ga_i$, $\ze_i$, $\mu_i$, $i=0,1$, is given by
\begin{subequations}
\begin{align}
 &s_{\be_0}:(\be_0,\be_1)\mapsto (-\be_0,\be_1+2\be_0),&&
 s_{\be_1}:(\be_0,\be_1)\mapsto (\be_0+2\be_1,-\be_1),\\
 &s_{\ga_0}:(\ga_0,\ga_1)\mapsto (-\ga_0,\ga_1+2\ga_0),&&
 s_{\ga_1}:(\ga_0,\ga_1)\mapsto (\ga_0+2\ga_1,-\ga_1),\\
 &s_{\ze_0}:(\ze_0,\ze_1)\mapsto (-\ze_0,\ze_1+2\ze_0),&&
 s_{\ze_1}:(\ze_0,\ze_1)\mapsto (\ze_0+2\ze_1,-\ze_1),
\end{align}
\end{subequations}
and \eqref{eqns:pi_3A1}.
We are now in a position to define $\rho_i$ by using the translations in $\widetilde{W}(3A_1^{(1)})$:
\begin{equation}\label{eqn:def_rho}
 \rho_1=s_{\be_0}\pi_{\be\mu},\quad
 \rho_2=s_{\ga_0}\pi_{\ga\mu},\quad
 \rho_0=s_{\ze_0}\pi_{\ze\mu},\quad
\end{equation}
whose actions on the parameters $\be_i$, $\ga_i$, $\ze_i$, $\mu_i$, $i=0,1$, are given by
\begin{subequations}
\begin{align}
 &\rho_1:(\be_0,\be_1,\mu_0,\mu_1)\mapsto(\be_0+1,\be_1-1,\mu_1,\mu_0),\\
 &\rho_2:(\ga_0,\ga_1,\mu_0,\mu_1)\mapsto(\ga_0+1,\ga_1-1,\mu_1,\mu_0),\\
 &\rho_0:(\ze_0,\ze_1,\mu_0,\mu_1)\mapsto(\ze_0+1,\ze_1-1,\mu_1,\mu_0).
\end{align}
\end{subequations}
We can easily verify that the transformations $\rho_i$, $i=0,1,2$, satisfy the properties {\bf(i)}, {\bf(ii)} and {\bf(iv)}.

For the second part of the theorem, note that with $z_0$ as defined in \eqref{eqn:def_z0}, the remaining directions $\rho_1$ and $\rho_2$ acts as the two shift directions. 
Now we introduce
\begin{equation}\label{eqn:def_z1z2z12}
 z_1=\rho_1(z_0),\quad
 z_2=\rho_2(z_0),\quad
 z_{12}=\rho_1\rho_2(z_0).
\end{equation}
We can verify that these variables $z_0$, $z_1$, $z_2$ and $z_{12}$ are related by
\begin{equation}\label{eqn:condition_z}
 \cfrac{(z_0+z_1)(z_{12}+z_2)}{(z_1+z_{12})(z_2+z_0)}=\cfrac{1}{x},
\end{equation}
by using the framework constructed, namely, \eqref{eqns:action_D4_tau}, \eqref{eqns:def_W(3A1)}, \eqref{eqn:def_rho}, \eqref{eqn:def_z0} and \eqref{eqn:def_z1z2z12}. 
For example, to express $z_1$ in terms of $\tau$-variables, we need the composition of $\rho_1$ with $\rho_0$.  
The starting point is \eqref{eqn:def_rho} which in turn are given by \eqref{eqns:def_W(3A1)} and the actions on $\tau$ functions \eqref{eqns:action_D4_tau}. 
In this way, we deduce
\begin{subequations}\label{eqns:z0+z1}
\begin{align}
&z_0 + z_1 = \ze_0\dfrac{q}{x^{\frac{1}{2}}}\,\dfrac{\tau_1\tau_3}{\tau_0\tau_4},
&&z_{12} + z_2 = \ze_0\dfrac{x-q}{x^{\frac{1}{2}}(1-q)}\,\dfrac{\tau_1\tau_3}{\tau_0\tau_4},\\
&z_{1} + z_{12} = \ze_0\dfrac{(x-q)q}{x^{\frac{1}{2}}(1-q)}\,\dfrac{\tau_1\tau_3}{\tau_0\tau_4},
&&z_{2} + z_{0} = \ze_0x^{\frac{1}{2}}\,\dfrac{\tau_1\tau_3}{\tau_0\tau_4},
\end{align}
\end{subequations}
where $q=-\ii x^{\frac{1}{2}}\tau_2^{(\si_2)}/\tau_2$. Then we immediately obtain Equation \eqref{eqn:condition_z} from Equations \eqref{eqns:z0+z1}.

Furthermore, from the actions of the inverses of $\rho_1$ and $\rho_2$, we find
\begin{equation}\label{eqn:rho1_2_inv}
 \ze_0 z_0
 =-\be_1\cfrac{(z_1+z_0)\big(z_0+{\rho_1}^{-1}(z_0)\big)}{z_1+{\rho_1}^{-1}(z_0)}
 -\ga_1\cfrac{(z_2+z_0)\big(z_0+{\rho_2}^{-1}(z_0)\big)}{z_2+{\rho_2}^{-1}(z_0)}.
\end{equation}
Using Equations \eqref{eqn:condition_z} and \eqref{eqn:rho1_2_inv} respectively we then obtain the cross-ratio equation \eqref{eqn:cross-ratio} and similarity equation \eqref{eqn:similarity} by defining the dependent variable by \eqref{eqn:def_z_l1l2}.
\end{proof}

\begin{remark}
In the proof of part II of the theorem we used the action of $\tW(D_4^{(1)})$ on $\tau$ functions. 
However, an alternative approach is available by starting with the definitions of the variables $z_0$, $z_1$, $z_2$, $z_{12}$, i.e. \eqref{eqn:def_z0} and \eqref{eqn:def_z1z2z12}. 
For this purpose, action \eqref{eqns:action_D4_tau} is interpreted in terms of the variables $z_0$, $z_1$, $z_2$, $z_{12}$. 
The resulting expressions are very large and to simplify these we can instead introduce $z_{12}$ by using \eqref{eqn:condition_z}, namely,
\begin{equation}\label{eqn:z12_by_z}
 z_{12} = - \frac{xz_2(z_0+z_1)-z_1(z_2+z_0)}{x(z_0+z_1)- (z_2+z_0)}.
\end{equation}
Complete details are provided in Appendix \ref{section:action_3A1_z}.
\end{remark}
\begin{remark}
Let
\begin{equation}
 u_{l_1,l_2,l_0}={\rho_1}^{l_1}{\rho_2}^{l_2}{\rho_0}^{l_0}(z_0),\quad l_1,l_2,l_0\in\bbZ.
\end{equation}
From Equation \eqref{eqn:condition_z} and the following actions of $\rho_i$, $1,2,3$$:$
\begin{subequations}
\begin{align}
 &\left(\frac{1}{z_1}+\frac{1}{z_0}\right) 
 (\rho_0(z_1)+\rho_0(z_0))
 =-\frac{\ze_0(\ze_0+1)}{x},\\
 &\left(\frac{1}{z_2}+\frac{1}{z_0}\right)
 (\rho_0(z_2)+\rho_0(z_0))
 =-\ze_0(\ze_0+1),
\end{align}
\end{subequations}
we obtain
\begin{subequations}\label{eqns:3D_lattice_P6}
\begin{align}
 &\cfrac{(u_{l_1,l_2,l_0}+u_{l_1+1,l_2,l_0})(u_{l_1+1,l_2+1,l_0}+u_{l_1,l_2+1,l_0})}
 {(u_{l_1+1,l_2,l_0}+u_{l_1+1,l_2+1,l_0})(u_{l_1,l_2+1,l_0}+u_{l_1,l_2,l_0})}=\cfrac{1}{x}\,,\label{eqn:face_1}\\
 &\left(\frac{1}{u_{l_1+1,l_2,l_0}}+\frac{1}{u_{l_1,l_2,l_0}}\right) 
 (u_{l_1+1,l_2,l_0+1}+u_{l_1,l_2,l_0+1})
 =-\frac{(\ze_0+l_0) (\ze_0+l_0+1)}{x},\label{eqn:face_2}\\
 &\left(\frac{1}{u_{l_1,l_2+1,l_0}}+\frac{1}{u_{l_1,l_2,l_0}}\right)
 (u_{l_1,l_2+1,l_0+1}+u_{l_1,l_2,l_0+1})
 =-(\ze_0+l_0)(\ze_0+l_0+1).\label{eqn:face_3}
\end{align}
\end{subequations}
These equations are equivalent to the lattice equations of ABS type discussed in \S \ref{section:cubic_lattice}.
Indeed, Equations \eqref{eqns:3D_lattice_P6} can be obtained from Equations \eqref{eqns:3D_lattice_kappa} by the following specialization of parameters:
\begin{equation}
 \ka^{(1)}_l=\dfrac{1}{x},\quad
 \ka^{(2)}_l=1,\quad
 \ka^{(3)}_l=(\ze_0+l)(\ze_0+l+1),
\end{equation}
where $l\in\bbZ$.
\end{remark}

\section{Description by weight lattice}\label{section:weight}
In this section, we relate our results to those obtained from a different perspective, namely the determinantal structure of hypergeometric solutions.  
We show that our $z$-variable \eqref{eqn:def_z_l1l2} is identical to that found in \cite{AHKM2014:MR3221837,HKM2011:MR2788707}.

To make this correspondence, we consider the actions of $\rho_i$, $i=0,1,2$, on the sublattice of the weight lattice. 
We here refer to a vector $(w-{\rm id}).x$ for $w\in\tW(\DIV)$, $x\in M_0$ as a displacement vector corresponding to $w$.  
Since the dependent variable of discrete power function in terms of the $\tau$ function is given by
\begin{equation}\label{eqn:z_tau_1}
 z_{l_1,l_2}=(-1)^{l_1+l_2}\cfrac{{\rho_0}^2{\rho_1}^{l_1}{\rho_2}^{l_2}(\tau_0)}{{\rho_1}^{l_1}{\rho_2}^{l_2}(\tau_0)},
\end{equation}
and $\rho_i$, $i=0,1,2$, satisfy the properties {\bf(i)} and {\bf(ii)} in \S \ref{section:geometric_description},
we here consider the displacement vectors of $\rho_1$ and $\rho_2$ on the following sublattice of $M_0$ \eqref{eqn:def_M_i}:
\begin{equation}
 L=L^{(0)}\sqcup L^{(1)}\sqcup L^{(2)}\sqcup L^{(12)},
\end{equation}
where 
\begin{subequations}
\begin{align}
 &L^{(0)}
 =\left\{{\rho_1}^{2l_1}{\rho_2}^{2l_2}(h_0)\,\Big|\, l_1,l_2\in\bbZ\right\}
 =h_0+\bbZ \bm{v}_1+\bbZ \bm{v}_2,\\
 &L^{(1)}
 =\left\{{\rho_1}^{2l_1+1}{\rho_2}^{2l_2}(h_0)\,\Big|\, l_1,l_2\in\bbZ\right\}
 =-h_1+h_2+h_5+\bbZ \bm{v}_1+\bbZ \bm{v}_2,\\
 &L^{(2)}
 =\left\{{\rho_1}^{2l_1}{\rho_2}^{2l_2+1}(h_0)\,\Big|\, l_1,l_2\in\bbZ\right\}
 =h_4+h_5+h_6+\bbZ \bm{v}_1+\bbZ \bm{v}_2,\\
 &L^{(12)}
 =\left\{{\rho_1}^{2l_1+1}{\rho_2}^{2l_2+1}(h_0)\,\Big|\, l_1,l_2\in\bbZ\right\}
 =h_2-h_3+h_6+\bbZ \bm{v}_1+\bbZ \bm{v}_2.
\end{align}
\end{subequations}
Here, $\bm{v}_i$, $i=0,1,2$, are displacement vectors on $M_0$ given by
\begin{equation}
 \bm{v}_0=-h_0+h_1+h_3-h_4,\quad
 \bm{v}_1=-h_0-h_1+2h_2-h_3-h_4,\quad
 \bm{v}_2=-h_0+h_1-h_3+h_4,
\end{equation}
which correspond to ${\rho_i}^2$, $i=0,1,2$, respectively.
The transformations $\rho_1$ and $\rho_2$ correspond to the vectors on $L$ as the following:
\begin{align}
 &\rho_1
 \leftrightarrow
 \begin{cases}
 -h_0-h_1+h_2+h_5,&\text{on }L^{(0)}\sqcup L^{(12)},\\
 h_2-h_3-h_4-h_5,&\text{on }L^{(1)}\sqcup L^{(2)},
 \end{cases}\\[0.2em]
 &\rho_2
 \leftrightarrow
 \begin{cases}
 -h_0+h_4+h_5+h_6,&\text{on }L^{(0)},\\
 h_1-h_3-h_5+h_6,&\text{on }L^{(1)},\\
 h_1-h_3-h_5-h_6,&\text{on }L^{(2)},\\
 -h_0+h_4+h_5-h_6,&\text{on }L^{(12)}.
 \end{cases}
\end{align}

\begin{figure}[t]
\begin{center}
\includegraphics[width=0.43\textwidth]{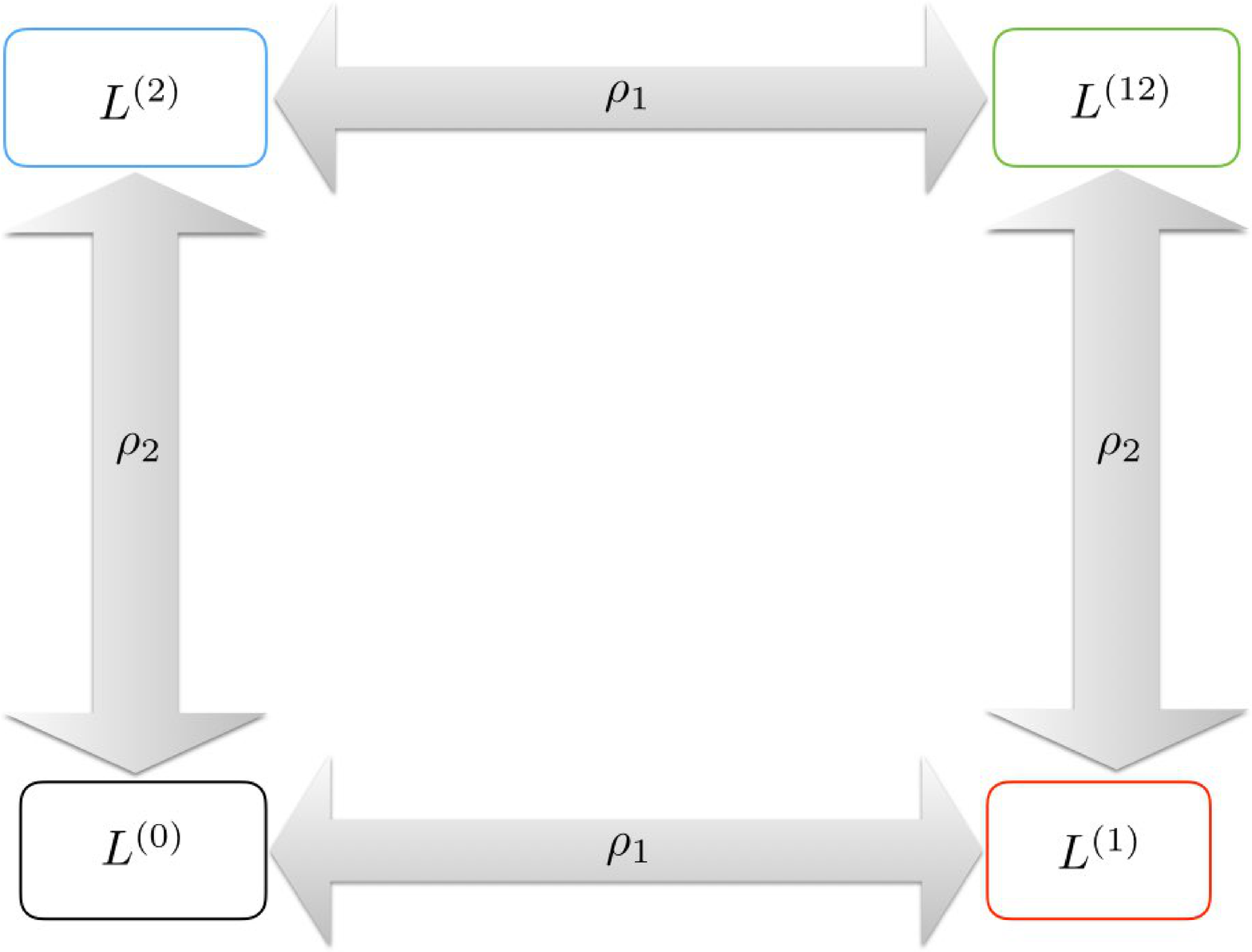}\quad
\includegraphics[width=0.53\textwidth]{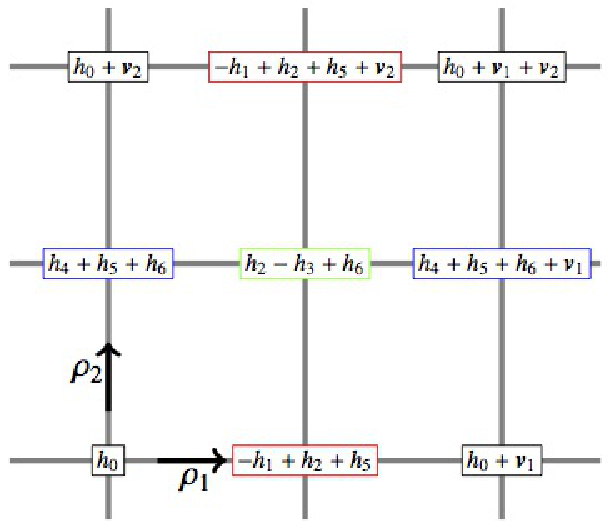}
\end{center}
\caption{The actions of $\rho_1$ and $\rho_2$ on the lattice $L$.
The sub-lattices are distinguished by producing them in different colors, that is,
$L^{(0)}$, $L^{(1)}$, $L^{(2)}$ and $L^{(12)}$ are colored in black, red, blue and green, respectively.}
\label{fig:lattice_L}
\end{figure}

Let us define the following translations in $\tW(\DIV)$ by
\begin{subequations}
\begin{align}
 &\hT_{13}=s_1s_2s_0s_4s_2s_1\si_3,\quad
 \hT_{40}=s_4s_2s_1s_3s_2s_4\si_3,\quad
 \hT_{34}=s_3s_2s_0s_1s_2s_3\si_1,\\
 &T_{14}=s_1s_4s_2s_0s_3s_2\si_2.
\end{align}
\end{subequations}
Using these translations, the actions of ${\rho_0}^2$, $\rho_1$ and $\rho_2$ on $L$ are expressed by
\begin{subequations}
\begin{align}
 &{\rho_0}^2={\hT_{13}}^{~-1}{\hT_{40}}^{~-1}{\hT_{34}}^{~-2},\\
 &\rho_1
 =\begin{cases}
 {\hT_{40}}^{~-1}T_{14},&\text{on }L^{(0)}\sqcup L^{(12)},\\
 {\hT_{13}}^{~-1}T_{14},&\text{on }L^{(1)}\sqcup L^{(2)},
 \end{cases}\\
 &\rho_2
 =\begin{cases}
 {\hT_{40}}^{~-1},&\text{on }L^{(0)}\sqcup L^{(12)},\\
 {\hT_{13}}^{~-1},&\text{on }L^{(1)}\sqcup L^{(2)}.
 \end{cases}
\end{align}
\end{subequations}
Therefore, the dependent variable of discrete power function can be expressed by
\begin{equation}\label{eqn:z_tau_2}
 z_{l_1,l_2}
 =\begin{cases}
 (-1)^{l_1-1}\cfrac{\tau_{-\frac{l_1+l_2}{2}-1,-\frac{l_1+l_2}{2}-1,-2,l_1}}{\tau_{-\frac{l_1+l_2}{2},-\frac{l_1+l_2}{2},0,l_1}}\,,
 &\text{if\, $l_1+l_2$ is even},\\
 (-1)^{l_1-1}\cfrac{\tau_{-\frac{l_1+l_2+1}{2},-\frac{l_1+l_2+1}{2}-1,-2,l_1}}{\tau_{-\frac{l_1+l_2-1}{2},-\frac{l_1+l_2+1}{2},0,l_1}}\,,
 &\text{if\, $l_1+l_2$ is odd},
 \end{cases}
\end{equation}
where
\begin{equation}
 \tau_{k,l,m,n}=\hT_{13}^{~k} {\hT_{40}}^{~l} {\hT_{34}}^{~m} {T_{14}}^{n}.
\end{equation}
This expression is equivalent to that in \cite{AHKM2014:MR3221837,HKM2011:MR2788707}. 
Note that the difference of the coefficients between \eqref{eqn:z_tau_1} and \eqref{eqn:z_tau_2} arises from Equations \eqref{eqns:fundamental_tau_gaps}.
\section{Concluding remarks}\label{ConcludingRemarks}
In this paper we have investigated the geometric structure of the system of partial difference equations \eqref{eqn:intro_cross-ratio} and \eqref{eqn:intro_similarity}.  
The cross-ratio equation \eqref{eqn:intro_cross-ratio} is naturally defined on a multidimensionally consistent cubic lattice as a face equation from the ABS theory.  
On the other hand, we showed in this paper, that its similarity constraint \eqref{eqn:intro_similarity} is derived from the symmetries of the lattice.

However, it was known that the system \eqref{eqn:intro_cross-ratio} and \eqref{eqn:intro_similarity} could be derived from the B\"acklund transformations of P$_{\rm VI}$, which form $\tW(\DIV)$.  
The question of how to relate the symmetry group $\tW(\DIV)$ to the symmetry group of the lattice remained open in the literature.  
We answered the question in this paper by constructing the action of $\tW(3A_1^{(1)})$ as a subgroup of $\tW(\DIV)$.  
Moreover, considering the realization on the level of $\tau$ function simultaneously, we gave the geometric characterizations of the discrete power function.  
In particular, we explained the odd-even structure appearing in the explicit formula in \cite{AHKM2014:MR3221837,HKM2011:MR2788707} in terms of the projective reduction.

It is interesting to relate the geometric structures reported in this paper to other properties of \eqref{eqn:intro_cross-ratio} and \eqref{eqn:intro_similarity} reported in the literature.  
For example, the discrete part of the Lax pair given in \cite{NC1995:MR1329559} can be obtained from translations on the $\tW(3A_1^{(1)})$. 
However, the remaining part of the Lax pair involves a monodromy variable, which is not visible from the point of view of the sixth Painlev\'e equation itself. 
We believe that this is related to the choice of gauge we took in defining the $\tau$ functions. 
This is an open subject for further investigation.

Not many examples of discrete complex analytic functions are known. 
In \cite{AB2003:MR2006759}, one was constructed on the hexagonal lattice.  
Further interesting directions concern the remaining Painlev\'e and discrete Painlev\'e equations.  
We expect our analysis to be useful for such equations, not only of second-order, to give geometric characterization of known examples or to identify new examples.  
Our results in these directions will be reported in forthcoming publications.
\section*{Data accessibility}
This paper does not contain any data.
\section*{Competing interests}
We have no competing interests.
\section*{Authors' contributions}
All authors contributed equally to writing the paper.
\section*{Acknowledgment}
N/A
\section*{Funding statement}
This research was supported by an Australian Laureate Fellowship \# FL120100094 and grant \# DP160101728 from the Australian Research Council
and JSPS KAKENHI Grant Numbers JP16H03941, JP16K13763 and JP17J00092.
\section*{Ethics statement}
This work does not include research on humans or animals.
\appendix
\section{Extended affine Weyl group $\tW(4A_1^{(1)})$}\label{section:action_4A1_para}\allowdisplaybreaks
In this section, we consider an affine Weyl group for $Q(4A_1^{(1)})$ spanned by the parameters $\be_i$, $\ga_i$, $\ze_i$, $\mu_i$, $i=0,1$.

Let
\begin{subequations}\label{eqns:def_W(4A1)}
\begin{align}
 &s_{\be_0}=s_4s_3s_1s_0s_2s_4s_3s_1s_0,\quad
 s_{\be_1}=s_2,\quad
 s_{\ga_0}=s_0s_3s_2s_0s_3,\quad
 s_{\ga_1}=s_1s_4s_2s_1s_4,\\
 &s_{\ze_0}=\si_2 s_1s_3s_2s_1s_3\si_2,\quad
 s_{\ze_1}=s_1s_3s_2s_1s_3,\quad
 s_{\mu_0}=s_0s_1s_2s_0s_1,\\
 &s_{\mu_1}=s_3s_4s_2s_3s_4,\quad
 \pi_{\be\ga}=\si_2 s_4s_3s_1s_0,\quad
 \pi_{\be\ze}=\si_3 s_4s_3s_1s_0,\quad
 \pi_{\be\mu}=\si_1 s_4s_3s_1s_0,\\
 &\pi_{\ga\ze}=\si_1,\quad
 \pi_{\ga\mu}=\si_3,\quad
 \pi_{\ze\mu}=\si_2,\quad
 r_1=s_1s_4,\quad
 r_2=s_1s_3,\quad
 r_3=s_3s_4,
\end{align}
\end{subequations}
whose actions on the parameters $\be_i$, $\ga_i$, $\ze_i$, $\mu_i$, $i=0,1$, are given by
\begin{subequations}
\begin{align}
 &s_{\be_0}:(\be_0,\be_1)\mapsto (-\be_0,\be_1+2\be_0),&&
 s_{\be_1}:(\be_0,\be_1)\mapsto (\be_0+2\be_1,-\be_1),\\
 &s_{\ga_0}:(\ga_0,\ga_1)\mapsto (-\ga_0,\ga_1+2\ga_0),&&
 s_{\ga_1}:(\ga_0,\ga_1)\mapsto (\ga_0+2\ga_1,-\ga_1),\\
 &s_{\ze_0}:(\ze_0,\ze_1)\mapsto (-\ze_0,\ze_1+2\ze_0),&&
 s_{\ze_1}:(\ze_0,\ze_1)\mapsto (\ze_0+2\ze_1,-\ze_1),\\
 &s_{\mu_0}:(\mu_0,\mu_1)\mapsto (-\mu_0,\mu_1+2\mu_0),&&
 s_{\mu_1}:(\mu_0,\mu_1)\mapsto (\mu_0+2\mu_1,-\mu_1),\\
 &\pi_{\be\ga}:(\be_0,\be_1,\ga_0,\ga_1)\mapsto(\be_1,\be_0,\ga_1,\ga_0),&&
 \pi_{\be\ze}:(\be_0,\be_1,\ze_0,\ze_1)\mapsto(\be_1,\be_0,\ze_1,\ze_0),\\
 &\pi_{\be\mu}:(\be_0,\be_1,\mu_0,\mu_1)\mapsto(\be_1,\be_0,\mu_1,\mu_0),&&
 \pi_{\ga\ze}:(\ga_0,\ga_1,\ze_0,\ze_1)\mapsto(\ga_1,\ga_0,\ze_1,\ze_0),\\
 &\pi_{\ga\mu}:(\ga_0,\ga_1,\mu_0,\mu_1)\mapsto(\ga_1,\ga_0,\mu_1,\mu_0),&&
 \pi_{\ze\mu}:(\ze_0,\ze_1,\mu_0,\mu_1)\mapsto(\ze_1,\ze_0,\mu_1,\mu_0),\\
 &r_1:(\be_i,\ga_i,\ze_i,\mu_i)\mapsto(\ga_i,\be_i,\mu_i,\ze_i),&&
 r_2:(\be_i,\ga_i,\ze_i,\mu_i)\mapsto(\ze_i,\mu_i,\be_i,\ga_i),\\
 &r_3:(\be_i,\ga_i,\ze_i,\mu_i)\mapsto(\mu_i,\ze_i,\ga_i,\be_i),
\end{align}
\end{subequations}
where $i=0,1$.
They collectively form the extended affine Weyl group of type $4A_1^{(1)}$:
\begin{equation}
 \tW(4A_1^{(1)})
 =\langle s_{\be_0},s_{\be_1},s_{\ga_0},s_{\ga_1},s_{\ze_0},s_{\ze_1},s_{\mu_0},s_{\mu_1},\pi_{\be\ga},\pi_{\be\ze},\pi_{\be\mu},\pi_{\ga\ze},\pi_{\ga\mu},\pi_{\ze\mu},r_1,r_2,r_3\rangle.
\end{equation}
Indeed, $\tW(4A_1^{(1)})$ satisfies the following fundamental relations:
\begin{subequations}\label{eqns:fundamental_rels_A1A1A1A1}
\begin{align}
 &{s_{\be_0}}^2={s_{\be_1}}^2={s_{\ga_0}}^2={s_{\ga_1}}^2={s_{\ze_0}}^2={s_{\ze_1}}^2={s_{\mu_0}}^2={s_{\mu_1}}^2=1,\\
 &(s_{\be_0}s_{\be_1})^\infty=(s_{\ga_0}s_{\ga_1})^\infty=(s_{\ze_0}s_{\ze_1})^\infty=(s_{\mu_0}s_{\mu_1})^\infty=1,\\
 &s_{\be_i}s_{\ga_j}=s_{\ga_i}s_{\be_j},\quad
 s_{\be_i}s_{\ze_j}=s_{\ze_i}s_{\be_j},\quad
 s_{\be_i}s_{\mu_j}=s_{\mu_i}s_{\be_j},\quad
 s_{\ga_i}s_{\ze_j}=s_{\ze_i}s_{\ga_j},\\
 &s_{\ga_i}s_{\mu_j}=s_{\mu_i}s_{\ga_j},\quad
 s_{\ze_i}s_{\mu_j}=s_{\mu_i}s_{\ze_j},\\
 &{\pi_{\be\ga}}^2={\pi_{\be\ze}}^2={\pi_{\be\mu}}^2={\pi_{\ga\ze}}^2={\pi_{\ga\mu}}^2={\pi_{\ze\mu}}^2=1,\\
 &\pi_{\ga\ze}=\pi_{\be\ga}\pi_{\be\ze},\quad
 \pi_{\ga\mu}=\pi_{\be\ga}\pi_{\be\mu},\quad
 \pi_{\ze\mu}=\pi_{\be\ze}\pi_{\be\mu},\\
 &\pi_{\be\ga}\pi_{\be\ze}=\pi_{\be\ze}\pi_{\be\ga},\quad
 \pi_{\be\ga}\pi_{\be\mu}=\pi_{\be\mu}\pi_{\be\ga},\quad
 \pi_{\be\ze}\pi_{\be\mu}=\pi_{\be\mu}\pi_{\be\ze},\\
 &{r_1}^2={r_2}^2={r_3}^2=1,\quad
 r_3=r_1r_2=r_2r_1,
\end{align}
\begin{align}
 &\pi_{\be\ga}s_{\{\be_0,\be_1,\ga_0,\ga_1,\ze_i,\mu_i\}}
 =s_{\{\be_1,\be_0,\ga_1,\ga_0,\ze_i,\mu_i\}}\pi_{\be\ga},\quad
 \pi_{\be\ze}s_{\{\be_0,\be_1,\ga_i,\ze_0,\ze_1,\mu_i\}}
 =s_{\{\be_1,\be_0,\ga_i,\ze_1,\ze_0,\mu_i\}}\pi_{\be\ze},\\
 &\pi_{\be\mu}s_{\{\be_0,\be_1,\ga_i,\ze_i,\mu_0,\mu_1\}}
 =s_{\{\be_1,\be_0,\ga_i,\ze_i,\mu_1,\mu_0\}}\pi_{\be\mu},\\
 &r_1s_{\{\be_i,\ga_i,\ze_i,\mu_i\}}=s_{\{\ga_i,\be_i,\mu_i,\ze_i\}}r_1,\quad
 r_2s_{\{\be_i,\ga_i,\ze_i,\mu_i\}}=s_{\{\ze_i,\mu_i,\be_i,\ga_i\}}r_2,\\
 &r_1\pi_{\{\be\ga,\be\ze,\be\mu,\ga\ze,\ga\mu,\ze\mu\}}=\pi_{\{\be\ga,\ga\mu,\ga\ze,\be\mu,\be\ze,\ze\mu\}}r_1,\quad
 r_2\pi_{\{\be\ga,\be\ze,\be\mu,\ga\ze,\ga\mu,\ze\mu\}}=\pi_{\{\ze\mu,\be\ze,\ga\ze,\be\mu,\ga\mu,\be\ga\}}r_2.
\end{align}
\end{subequations}
Here, the relation $(ww')^\infty=1$ for transformations $w$ and $w'$ means that there is no positive integer $N$ such that $(ww')^N=1$.
Note that from the definitions \eqref{eqns:def_W(3A1)} and \eqref{eqns:def_W(4A1)}, $\tW(3A_1^{(1)})\subset \tW(4A_1^{(1)})$ holds.
\section{Birational actions of $\tW(3A_1^{(1)})$ on the $z$-variables}\label{section:action_3A1_z}
The action of $\tW(3A_1^{(1)})$ on the variables $z_0$, $z_1$, $z_2$, $z_{12}$, where $z_{12}$ is related to other variables as \eqref{eqn:z12_by_z}, is given by the following:
\begin{subequations}\label{eqns:action_z_3A1}
\begin{align}
 &s_{\be_0}(z_0)
 =\frac{2 \be_0 z_1z_{12}-(\be_0+\ga_0+\ze_0-\mu_0)z_0z_1+(\be_0-\ga_0+\ze_0+\mu_0)z_0z_{12}}
 {2 \be_0 z_0+(\be_0-\ga_0-\ze_0+\mu_0)z_1-(\be_0+\ga_0-\ze_0-\mu_0)z_{12}},\\
 &s_{\be_0}(z_2)
 =\frac{2 \be_0 z_1z_{12}+(\be_0+\ga_0+\ze_0-\mu_0)z_1z_2-(\be_0-\ga_0+\ze_0+\mu_0)z_2z_{12}}
 {2 \be_0 z_2-(\be_0-\ga_0-\ze_0+\mu_0)z_1+(\be_0+\ga_0-\ze_0-\mu_0)z_{12}},\\
 &s_{\ga_0}(z_0)
 =\frac{2 \ga_0 z_2z_{12}-(\be_0-\ga_0-\ze_0-\mu_0)z_0z_{12}-(\be_0+\ga_0+\ze_0-\mu_0)z_0z_2}
 {2 \ga_0 z_0-(\be_0+\ga_0-\ze_0-\mu_0)z_{12}-(\be_0-\ga_0+\ze_0-\mu_0)z_2},\\
 &s_{\ga_0}(z_1)
 =\frac{2 \ga_0 z_2z_{12}+(\be_0-\ga_0-\ze_0-\mu_0)z_1z_{12}+(\be_0+\ga_0+\ze_0-\mu_0)z_1z_2}
 {2 \ga_0 z_1+(\be_0+\ga_0-\ze_0-\mu_0)z_{12}+(\be_0-\ga_0+\ze_0-\mu_0)z_2},\\
 &s_{\ze_0}:(z_0,z_1,z_2,z_{12})\mapsto(-{z_0}^{-1},-{z_1}^{-1},-{z_2}^{-1},-{z_{12}}^{-1}),\\
 &s_{\be_1}=\pi_{\be\mu}s_{\be_0}\pi_{\be\mu},\quad
 s_{\ga_1}=\pi_{\ga\mu}s_{\ga_0}\pi_{\ga\mu},\quad
 s_{\ze_1}=\pi_{\ze\mu}s_{\ze_0}\pi_{\ze\mu},\\
 &\pi_{\be\mu}:(z_0,z_1,z_2,z_{12})\mapsto(z_1,z_0,z_{12},z_2),\quad
 \pi_{\ga\mu}:(z_0,z_1,z_2,z_{12})\mapsto(z_2,z_{12},z_0,z_1),
\end{align}
\begin{align}
 \pi_{\ze\mu}(z_0)
 =&-\ze_0\frac{(\be_0-\ga_1+\ze_0-\mu_1)z_0+(\be_0+\ga_1-\ze_0-\mu_1)z_1}{2 (z_0+z_2) (z_1+z_{12})}\notag\\
 &-\ze_0\frac{(\be_1-\ga_1-\ze_0+\mu_0)z_2+(\be_1+\ga_1-\ze_0+\mu_0)z_{12}}{2 (z_0+z_2) (z_1+z_{12})},\\
 \pi_{\ze\mu}(z_1)
 =&-\ze_0\frac{(\be_1+\ga_1-\ze_0-\mu_0)z_0+(\be_1-\ga_1+\ze_0-\mu_0)z_1}{2 (z_0+z_2) (z_1+z_{12})}\notag\\
 &-\ze_0\frac{(\be_0+\ga_1-\ze_0+\mu_1)z_2+(\be_0-\ga_1-\ze_0+\mu_1)z_{12}}{2 (z_0+z_2) (z_1+z_{12})},\\
 \pi_{\ze\mu}(z_2)
 =&-\ze_0\frac{(\be_1-\ga_0-\ze_0+\mu_1)z_0+(\be_1+\ga_0-\ze_0+\mu_1)z_1}{2 (z_0+z_2) (z_1+z_{12})}\notag\\
 &-\ze_0\frac{(\be_0-\ga_0+\ze_0-\mu_0)z_2+(\be_0+\ga_0-\ze_0-\mu_0)z_{12}}{2 (z_0+z_2) (z_1+z_{12})},\\
 \pi_{\ze\mu}(z_{12})
 =&-\ze_0\frac{(\be_0+\ga_0-\ze_0+\mu_0)z_0+(\be_0-\ga_0-\ze_0+\mu_0)z_1}{2 (z_0+z_2) (z_1+z_{12})}\notag\\
 &-\ze_0\frac{(\be_1+\ga_0-\ze_0-\mu_1)z_2+(\be_1-\ga_0+\ze_0-\mu_1)z_{12}}{2 (z_0+z_2) (z_1+z_{12})}.
\end{align}
\end{subequations}
Under the actions on the parameters \eqref{eqns:pi_3A1} and \eqref{eqns:def_W(3A1)} and $z$-variables \eqref{eqns:action_z_3A1}, the following fundamental relations hold:
\begin{subequations}\label{eqns:fundamental_rels_3A1}
\begin{align}
 &{s_{\be_i}}^2={s_{\ga_i}}^2={s_{\ze_i}}^2=1,\quad i=0,1,\quad
 (s_{\be_0}s_{\be_1})^\infty=(s_{\ga_0}s_{\ga_1})^\infty=(s_{\ze_0}s_{\ze_1})^\infty=1,\\
 &(s_{\be_i}s_{\ga_j})^2=(s_{\be_i}s_{\ze_j})^2=(s_{\ga_i}s_{\ze_j})^2=1,\quad i,j=0,1,\quad
 {\pi_{\be\mu}}^2={\pi_{\ga\mu}}^2={\pi_{\ze\mu}}^2=1,\\
 &\pi_{\be\mu}s_{\{\be_k,\ga_k,\ze_k\}}
 =s_{\{\be_{k+1}\ga_k,\ze_k\}}\pi_{\be\mu},\quad
 \pi_{\ga\mu}s_{\{\be_k,\ga_k,\ze_k\}}
 =s_{\{\be_k,\ga_{k+1},\ze_k\}}\pi_{\ga\mu},\\
 &\pi_{\ze\mu}s_{\{\be_k,\ga_k,\ze_k\}}
 =s_{\{\be_k,\ga_k,\ze_{k+1}\}}\pi_{\ze\mu},\quad k\in\bbZ/(2\bbZ).
\end{align}
\end{subequations}
\def\cprime{$'$} \def\cprime{$'$}

\end{document}